 \newtheorem{thm}{Theorem}[section]
 \newtheorem{cor}[thm]{Corollary}
 \newtheorem{lem}[thm]{Lemma}
 \newtheorem{prop}[thm]{Proposition}
 \theoremstyle{definition}
 \theoremstyle{remark}
 \numberwithin{equation}{section}
\begin{document}

\title[Planar elliptic growth]
 {Planar Elliptic Growth}
%----------Author 1
\author[Khavinson]{Dmitry Khavinson}

\address{Department of Mathematics and Statistics,
 University of South Florida, Tampa, FL 33620}

\email{dkhavins@cas.usf.edu}
\thanks{This work was supported by the 20070483ER project
Minimal Description of Complex Interfaces of the LDRD program at
LANL. The first and third authors were also partially supported by
the NSF grants.}
%----------Author 2
\author[Mineev]{Mark Mineev-Weinstein}
\address{LANL,
MS-365, Los Alamos, NM 87545}

\email{mariner@lanl.gov}

\author[Putinar]{Mihai Putinar}
\address{Department of Mathematics,
University of California, Santa Barbara, CA 93106}

\email{mputinar@math.ucsb.edu}
%----------classification, keywords, date
\subjclass{Primary 76S05  ; Secondary 76D27, 31A25, 30C20, 31B35,
35J10 }

\keywords{Moving boundaries, Elliptic growth, Laplacian growth,
Schwarz function, Beltrami equation, Schr\"odinger operator,
Dirichlet problem, Carleman equation}

\date{today}
%----------additions
\dedicatory{To our friend Bj\"orn Gustafsson}
%%% ----------------------------------------------------------------------

\begin{abstract}

 The planar elliptic extension of the Laplacian growth is, after a proper
parametrization, given in a form of a solution to the equation for
area-preserving diffeomorphisms.
%***
%{\it proper parametrization}
%is better - perhaps many readers don't know what's a stream function
%and why and where is fluid here.
%replace
%{\it parametrizing the moving interface by the stream function of
%the incompressible fluid.}
%***
The infinite set of conservation laws associated with such elliptic
growth is interpreted in terms of potential theory, and the
relations between two major forms of the elliptic growth are
analyzed.  The constants of integration for closed form solutions
are identified as the singularities of the Schwarz function, which
are located both inside and outside the moving contour.
Well-posedness of the recovery of the elliptic operator governing
the process from the continuum of interfaces parametrized by time is
addressed and two examples of exact solutions of elliptic growth are
presented.

\end{abstract}

%%% ----------------------------------------------------------------------
\maketitle

\section{Introduction}

Several moving boundary processes, such as solidification
\cite{Langer}, electrodeposition \cite{Gollub}, viscous fingering
\cite{Bensimon}, and bacterial growth \cite{BenJacob}, to name a
few, can be reduced, after some idealizations, to the Laplacian
growth, which can be described  as follows:

\begin{equation}\label{growthEq}
V(\xi) = \partial_n G_{D(t)}(\xi,a).
\end{equation}

Here $V$ is the normal component of the velocity of the boundary
$\partial D(t)$ of the moving domain $D(t) \subset {\mathbb R}^d$,
$\xi \in
\partial D(t)$, $t$ is time, $\partial_n$ is the normal component of
the gradient, and $G_{D(t)}(\xi,a)$ is the Green function of the
domain $D(t)$ for the Laplace operator with a unit source located at
the point $a \in D(t)$.

In two dimensions this equation can be rewritten as the
area-preserving diffeomorphism identity
\begin{equation}\label{areapres}
\Im \,(\bar z_t z_{\phi}) = 1,
\end{equation}
where $z(t,\phi): = \partial D(t)$ is the moving boundary
parameterized by $\phi \in [0,\,2\pi]$ and conformal when
analytically extended in the region  $\Im \phi \leq 0$ \cite{Galin,
Polubarinova-Kochina}.  The equation (\ref{areapres}) possesses many
remarkable properties, among which, the most noticeable ones
are the existence of an infinite set of conservation laws:
\begin{equation}\label{moments}
C_n = \int_{D(t)}\, z^n \, dx\,dy,
\end{equation}
where $n$ runs over all non-negative \cite{Richardson1972}
(non-positive \cite{Mineev1990}) integers in the case of a finite
(infinite) domain $D(t)$, and an impressive list of exact
time-dependent closed form solutions \cite{Varchenko}. For a
beautiful interpretation of conserved quantities $C_n$ as
coefficients of the multi-pole expansion of the fictitious
Newtonian potential created by matter uniformly occupying the
domain $D(t)$ see, e.g., \cite{Varchenko}.

It was established in \cite{Mineev-Weinstein2000} that the interface dynamics
described by  (\ref{areapres}) is equivalent to the dispersionless integrable
2D Toda hierarchy \cite{2DToda}, constrained by the string equation. Remarkably,
this hierarchy, being one of the richest existing integrable structures,
describes an existing theory of 2D quantum gravity (see the comprehensive review
\cite{2DToda} and references therein).  The work \cite{Mineev-Weinstein2000}
generated a splash of activity in apparently different mathematical and physical
directions revealing profound connections between Laplacian growth and random
matrices \cite{Krichever2001}, the Whitham theory \cite{Krichever2004}, and
quadrature domains \cite{Mineev-Put-Teo}.

In this paper we present a natural extension of the Laplacian
growth, where the Green function of $D(t)$ for the Laplace operator
$\nabla^2$ in the RHS of (1.1) is replaced by the Green function of
a linear elliptic operator,
\begin{equation}\label{Loperator}
L = \nabla \cdot (\lambda({\bf x})\nabla) - u({\bf x}), \qquad
\lambda({\bf x}) > 0, \qquad {\bf x} \in {\mathbb R}^d.
\end{equation}
Such a process, which is natural to be named {\it an elliptic
growth}, is clearly much more common in physics than the Laplacian
growth.

Consider, for instance, viscous fingering between viscous and
inviscid fluids in the porous media governed by Darcy's law
\begin{equation}\label{Darcy}
{\bf v} = - \lambda \nabla p,
\end{equation}
where $\lambda$ is the filtration coefficient of the media and $p$
is the pressure (equal to the Green function, $G_{D(t)}$, defined in
(1.1) in most of the cases of interest for us). One can easily
imagine a non-homogeneous media where the filtration coefficient
$\lambda$ is space-dependent.  Such examples of elliptic growth,
where the elliptic operator $L$ has the form of the Laplace-Beltrami
operator, $L = \nabla \cdot \lambda \nabla$, and $\lambda$ is a
prescribed function of ${\bf x}$, will be called an elliptic growth
{\it of the Beltrami type}.  It is clear that all moving boundary
problems other than viscous fingering with a non-homogeneous kinetic
coefficient $\lambda$
%and when $u = 0$ in -I do not understand it!
%$L$ defined in (\ref{Loperator})
fall into this category.

From a mathematical point of view this process is the Laplacian
growth occurring on curved surfaces instead of the Euclidean
plane. In this case the Laplace equation is naturally replaced by
the Laplace-Beltrami equation, and $\lambda$ (that can be a matrix
instead of a scalar as it is in our case) is related to the metric
tensor. There are several works addressing the Hele-Shaw problem
on curved surfaces and we will mention below those few related to
the integrable mathematical structure of elliptic growth.

Another major source of examples of elliptic growth is related to
screening effects, when $u \neq 0$, while $\lambda$ is constant in
(\ref{Loperator}).  The simplest example of this kind is an
electrodeposition, where the field $p$ is the electrostatic
potential of the electrolyte.  It is known that in reality
electrolytes ions are always locally surrounded by a cloud of
oppositely charged ions. This screening modifies the Laplace
equation for the electrostatic potential by adding to the Laplace
operator the negative screening term, $-u(x)$, which stands for the
inverse square of the radius of the Debye-Hukkel screening in the
classical plasma \cite{Landau}.  For the homogeneous screening $u$
is a (positive) constant, so the operator $L$ becomes the Helmholtz
operator, while for the non-homogeneous case, when $u$ is not a
constant, $L$ is a standard Schr\"odinger operator.  Motivated by
this example,  we will call the moving boundary problem for $L =
\nabla^2 - u$ an elliptic growth of {\it Schr\"odinger type}.

We show that  these rather general types of elliptic growth still
retain remarkable mathematical properties, similar to those
possessed by the Laplacian growth.  A mixed case with a non-constant
$\lambda$ and non-zero $u$ also shares similar properties but is
less representative in physics and can always be reduced to one of
the two former types of elliptic growth by a simple transformation
described later on in the article.  For completeness  we shall
indicate another class of  elliptic growth when the fluid density
$\rho$ changes in space while it is constant in time.  This happens
for instance when porosity (fraction of porous media accessible for
fluid) is space-dependent. In this case the continuity equation for
incompressible fluid in porous media has the form
\begin{equation}\label{phi-lambda}
\nabla(\rho({\bf x})\lambda({\bf x})\nabla p) = 0,
\end{equation}
while (\ref{Darcy}) still holds.  This case presents an additional
extension of the elliptic growth related to  potential theory with a
non-uniform density, as will be shown below.

In prior works on elliptic growth an infinite number of conservation
laws, regarded as extensions of (\ref{moments}), were identified in
\cite{Varchenko, Mineev-Weinstein1993}.  Also an integrable example
in 2D, which corresponds to a very special choice of the
conductivity function, $\lambda({\bf x})$, was explicitly
constructed in \cite{Lou}-\cite{LY}.  The elliptic growth in these
works was reduced to the well-known  Calogero-Moser integrable
system.

The present article contains several new results on elliptic growth
and reviews the known conservation laws from a slightly novel
perspective. It is organized as follows:

- In Section 2 we interpret (\ref{areapres}) as the equation of the
area-preserved diffeomorphism in 2D and analyze its connections with
the Laplacian growth.

- Section 3 contains the definition of elliptic growth, the
conservation laws for this process and recasts the latter in terms
of the inverse non-Newtonian potential theory.

- In Section 4 we obtain the equation (\ref{areapres}) for elliptic
growth of the Beltrami type by introducing the function $q$,
conjugate with respect to $p$ defined in (\ref{Loperator}), which
plays the role of a stream function for the incompressible fluid;
furthermore, we obtain the Beltrami equation for the function
$p+iq$.

- Section 5 is devoted to analyzing connections between the elliptic
growth of the Beltrami and  Schr\"odinger types, elucidating the
difficulties of parametrization of the interface for the
Schr\"odinger type.

- In Section 6 we reformulate the elliptic growth in terms of the
Schwarz function of the moving interface.

- Section 7 addresses a well-posedness of a recovery problem for the
operator $L$ from the continuum of moving interfaces parameterized
by time.

- In Section 8 we discuss  Herglotz' theorem as the main device to
generate exact solutions and present two examples of the exact
closed form solutions of the elliptic growth.  We also identify the
constants of motion of these solutions as the singularities of the
Schwarz function of the moving contour.

- Section 9 contains brief conclusions.

Due to the fact that one of us is labelled as a theoretical
physicist and following the customs of the physics community the
references do not appear in alphabetical order.

{\bf Acknowledgement.} This work took shape during a visit of the
first and third author to the Los Alamos
National Laboratory. They warmly thank this institution for an inspiring and stimulating atmosphere.\\

\subsection{List of notations and conventions} We collect below a few basic definitions and
notations used throughout the text.
\bigskip

%curl $f = \nabla \times (0,0,f) = (f_y, -f_x) $, where $f(x,y)$ is a differentiable function

$\nabla^2 = \Delta,$ \qquad $\nabla f = {\rm grad} \, f$;

$\nabla (\bf U) = \nabla \cdot {\bf U} = {\rm div} \, {\bf U}$,
where ${\bf U}$ is a vector field;

%we will freely switch between partial derivatives notations: $\frac{\partial u}{\partial z} = \frac{d u}{d z}$

$\mathcal C^\omega$ denotes the class of real analytic functions;

$\dot{h} = \frac{\partial h}{\partial t}$;

$n$ stands for the outer unit normal to the moving boundary $\Gamma
= \Gamma(t)$;

$\ell$ denotes the arc length on the boundary $\Gamma$;

$dA = \frac{1}{\pi} d\, Area = \frac { dx\wedge dy}{\pi}$;

an {\it analytic Jordan curve} means a smooth Jordan curve which
admits a real analytic parametrization.

\section{Area preserving diffeomorphisms} This section contains some immediate implications of the
equation of  area preserving diffeomorphisms related to the
parametrization
%of the boundary
of an analytic Jordan curve. Later on  we shall see that this,
apparently innocent, Jacobian identity plays an important role in
the study of  moving boundaries governed by elliptic growth.

\subsection{Fourier expansion} Consider the equation
\begin{equation}\label{AreaPres}
 {\Im} (\overline{z_t} z_q) = 1,\end{equation}
where $t \in [0, T]$ is a non-negative variable (usually identified
with time), while $q \in [0, 2 \pi]$ is the parameter along the
Jordan analytic curve $C_t$. Equation (\ref{AreaPres}) can be
interpreted as an area preserving property: that is the Jacobian of
the transformation
$$ (t,q) \mapsto (x,y),     \ \  {\rm where} \ \ \ z(t,q) = x+iy,$$
is equal to $1$.

To be more precise, for a fixed $t$, we assume that the
$2\pi$-periodic real analytic map
$$ z(t, \cdot) : [0,2\pi] \longrightarrow \mathbb C$$
is an embedding, and its range is denoted by $C_t$. We denote by
$D(t)$ the interior of the Jordan curve $C_t$.

In view of the smoothness hypothesis imposed on $z(t,q)$ we can
expand the function $z(t,q)$ in a Fourier series
\begin{equation}\label{Fourierseries}
 z(t,q) = \sum_{-\infty}^\infty  a_k(t) e^{ikq}.
 \end{equation}
 We will assume that the dependence $t \mapsto z(t,q)$ is $ C^1$.
 Also, we put $w = e^{iq}$, so that
 we can rewrite (2.2) as
 $$ z(t,w) = \sum_{-\infty}^\infty a_k(t) w^k.$$

 By the analyticity assumption, there exists $\epsilon, 0 < \epsilon <1$, so that the above Laurent series is
 convergent in the annulus $1-\epsilon < |w| < 1+\epsilon.$

 Due to the real analyticity of the map $z$, the Fourier series for $z$ and its derivatives
 are absolutely and
 uniformly convergent, whence
 $$ z_q (t,q) = i  \sum_{-\infty}^\infty  k a_k(t) e^{ikq},$$
 and
 $$ \overline{z_t}(t,q) =   \sum_{-\infty}^\infty  \overline{\dot{a}_n(t)} e^{-inq}.$$

 Thus, equation (\ref{AreaPres}) becomes
 $$ 1 = \Im (\overline{z_t} z_q)  = \frac{1}{2} \sum_{k,n} (\overline {\dot{a}_n} k a_k + \dot{a_k} n \overline{a_n})
 e^{i(k-n)q} =$$
 $$ \sum_{m=-\infty}^\infty [ \frac{1}{2} \sum_{n=-\infty}^\infty ((n+m) \overline{\dot{a}_n} a_{n+m} +
n  \dot{a}_{n+m} \overline{a_n})] e^{imq}.$$

By equating the coefficients we find
$$ 1 = \frac{1}{2} \sum_{n=-\infty}^\infty (n\overline{\dot{a}_n} a_{n} +
n  \dot{a}_{n} \overline{a_n}),$$ and
$$ 0 = \frac{1}{2} \sum_{n=-\infty}^\infty ((n+m) \overline{\dot{a}_n} a_{n+m} +
n  \dot{a}_{n+m} \overline{a_n}),$$ whenever $m \neq 0$.

Since
$$ {\rm Area} (D(t)) = \int_{C_t} \frac{\bar{z} dz}{2i} = \int_{|w|=1}
\overline{z(t,q)} \frac{\partial z(t, w)}{\partial w} \frac{dw}{2i}
= \int_0^{2\pi} \overline{z(t,q)} \frac{\partial z(t, q)}{\partial
q} \frac{dq}{2i},$$ we derive the following remarkable identity.

\begin{prop} Under the assumption (\ref{AreaPres}), the family of domains
bounded by the curves $z(t,q), \ 0 \leq q \leq 2\pi,$ satisfy
$$ \frac{d {\rm Area}(D(t))}{dt} = 2 \pi.$$
\end{prop}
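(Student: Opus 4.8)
The plan is to differentiate the boundary integral for the area directly and reduce the integrand to the pointwise constraint (\ref{AreaPres}). Starting from the formula
$$ \mathrm{Area}(D(t)) = \frac{1}{2i}\int_0^{2\pi} \overline{z(t,q)}\, z_q(t,q)\, dq $$
derived just above, I would differentiate under the integral sign --- legitimate because $z$ is $C^1$ in $t$ and the Fourier series of $z$ together with its $q$-derivatives converge absolutely and uniformly --- to obtain
$$ \frac{d\,\mathrm{Area}(D(t))}{dt} = \frac{1}{2i}\int_0^{2\pi} \left( \overline{z_t}\, z_q + \bar z\, z_{qt}\right) dq. $$

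The key step is to dispose of the second term by integration by parts in $q$. Writing $z_{qt} = \partial_q z_t$ and using that $z(t,\cdot)$, hence $z_t(t,\cdot)$, is $2\pi$-periodic, the boundary contribution vanishes, so that $\int_0^{2\pi}\bar z\, \partial_q z_t\,dq = -\int_0^{2\pi} \overline{z_q}\, z_t\, dq$, since $\partial_q \bar z = \overline{z_q}$. Substituting back gives
$$ \frac{d\,\mathrm{Area}(D(t))}{dt} = \frac{1}{2i}\int_0^{2\pi}\left(\overline{z_t}\, z_q - z_t\, \overline{z_q}\right) dq = \int_0^{2\pi} \Im\!\left(\overline{z_t}\, z_q\right) dq, $$
where I used $\Im \zeta = (\zeta - \bar\zeta)/2i$. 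Invoking the hypothesis (\ref{AreaPres}), the integrand is identically $1$, whence the integral equals $2\pi$, as claimed.

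I expect the only points genuinely requiring care to be the justification of differentiation under the integral and the vanishing of the boundary term in the integration by parts; both are immediate consequences of the standing real-analyticity and $2\pi$-periodicity assumptions on $z(t,q)$, so no serious obstacle arises.

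As an alternative that ties the statement directly to the Fourier relations obtained above, one may instead insert (\ref{Fourierseries}) into the area formula to get $\mathrm{Area}(D(t)) = \pi \sum_k k\,|a_k(t)|^2$, differentiate term by term, and recognize the resulting expression $2\pi \sum_n n\,\Re(\overline{\dot a_n}\, a_n)$ as exactly $2\pi$ times the $m=0$ coefficient identity $1 = \tfrac12\sum_n (n\overline{\dot a_n}\, a_n + n\, \dot a_n\, \overline{a_n})$ established immediately before the proposition.
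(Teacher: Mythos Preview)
Your main argument is correct and is in fact more direct than what the paper does. The paper arrives at the proposition via the Fourier expansion: it inserts (\ref{Fourierseries}) into the boundary area formula to obtain $\mathrm{Area}(D(t)) = \pi\sum_k k\,|a_k(t)|^2$, differentiates term by term, and then invokes the $m=0$ coefficient identity $1=\tfrac12\sum_n(n\overline{\dot a_n}a_n+n\dot a_n\overline{a_n})$ established just above. In other words, the paper's proof is precisely the ``alternative'' you describe at the end of your proposal.

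Your primary route---differentiate the boundary integral, integrate by parts in $q$, and collapse the integrand to $\Im(\overline{z_t}z_q)\equiv 1$---bypasses the Fourier machinery entirely and uses only periodicity and the pointwise hypothesis (\ref{AreaPres}). It has the advantage of working under the bare $C^1$-in-$t$, $C^1$-periodic-in-$q$ assumptions, without any appeal to absolute convergence of Fourier series, and makes transparent that the result is nothing more than the area-preserving Jacobian condition integrated over one period. The paper's Fourier approach, while slightly less economical for this particular statement, is set up because the full hierarchy of coefficient identities (the $m\neq 0$ relations) is used elsewhere.
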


By regarding $z$ now as a function of $t$ and the complex variable
$w = e^{iq}$, we obtain, along the curve $C_t$, the following:
$$ \frac{\partial z}{\partial q} = \frac{\partial z}{\partial w} \frac{\partial w}{\partial q} = z_w i w.$$

Denote, for the sake of simplicity, $z' = z_w$.  Then, the master equation (\ref{AreaPres}) becomes
$$ \Re (w z' \dot {\overline{z}}) = 1.$$
Since $z(t,\cdot)$ is analytic in the annulus $0< 1-\epsilon < |w| < 1+\epsilon$, the following result follows.

\begin{prop} Under the above assumptions
$$ w z'(t,w) \dot{z}^\sharp (t, 1/w) + \frac{1}{w} z'^\sharp(t,\frac{1}{w}) \dot{z}(t,w) = 2, \ \  1-\epsilon < |w| < 1+\epsilon.$$
\end{prop}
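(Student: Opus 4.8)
The plan is to derive the asserted identity first on the unit circle $|w|=1$, directly from the master equation $\Re (w z' \dot{\overline z}) = 1$ established just above the statement, and then to propagate it to the full annulus $1-\epsilon < |w| < 1+\epsilon$ by analytic continuation. The first move is to polarize the real part: writing $\Re X = \frac12 (X + \overline X)$ with $X = w z'(t,w)\,\overline{\dot z(t,w)}$, the master equation is equivalent to $X + \overline X = 2$ on $|w|=1$. Everything then reduces to re-expressing $X$ and $\overline X$ through the reflection $\sharp$.

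The next step is to recall that for $f(w) = \sum_k a_k w^k$ the reflection $f^\sharp(w) := \overline{f(\bar w)} = \sum_k \overline{a_k} w^k$ merely conjugates the coefficients, and that $\dot{}$ commutes both with conjugation and with $\sharp$. Since $\bar w = 1/w$ whenever $|w|=1$, one has on the circle $\overline{\dot z(t,w)} = \dot z^\sharp(t, 1/w)$ and $\overline{z'(t,w)} = z'^\sharp(t,1/w)$. Substituting these, together with $\bar w = 1/w$, into $X + \overline X = 2$ produces exactly
$$ w z'(t,w)\, \dot z^\sharp(t,1/w) + \frac1w z'^\sharp(t,1/w)\, \dot z(t,w) = 2, \qquad |w|=1. $$

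To finish, I would observe that each factor on the left extends holomorphically to a full neighborhood of the circle inside the annulus: $z(t,\cdot)$, hence $z'$ and $\dot z$, are analytic in $1-\epsilon < |w| < 1+\epsilon$ by hypothesis, while $w \mapsto f^\sharp(t,1/w)$ is analytic wherever $1/w$ lies in that annulus, i.e. in $\frac{1}{1+\epsilon} < |w| < \frac{1}{1-\epsilon}$; the intersection is an annular neighborhood of $|w|=1$. Thus the left-hand side minus $2$ is holomorphic in $w$ on an annulus containing the circle and vanishes on the entire circle, a set with limit points in the domain, so by the identity theorem it vanishes identically, giving the stated equality throughout $1-\epsilon < |w| < 1+\epsilon$. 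The only delicate point, and the step I would treat most carefully, is this bookkeeping of analyticity domains — in particular confirming that the $\sharp$-reflection composed with inversion is genuinely holomorphic (not merely real-analytic) in $w$ near the circle, precisely because $\overline{f(\bar w)}$ re-conjugates the coefficients of a convergent Laurent series and therefore remains holomorphic, so that the identity theorem is legitimately applicable; the algebraic identity on $|w|=1$ itself is routine once the $\sharp$ convention is fixed.
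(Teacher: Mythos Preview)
Your argument is correct and is exactly the approach the paper intends: the paper's entire proof is the one-line remark that the result follows from the analyticity of $z(t,\cdot)$ in the annulus, and you have simply written out the implied polarization $\Re X=\tfrac12(X+\overline X)$, the substitution $\overline{f(w)}=f^\sharp(1/w)$ on $|w|=1$, and the analytic continuation. The only cosmetic point is that the common domain of analyticity of the two summands is really the slightly smaller annulus $\tfrac{1}{1+\epsilon}<|w|<1+\epsilon$, but this imprecision is already present in the paper's statement and does not affect the argument.
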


(As usual,  for a complex analytic function $h(w)$, we denote by
$$h^\sharp (w) = \overline{h(\overline{w})},$$
 obtained from $h$ by conjugating its Taylor coefficients.)

\subsection {Analytic parametrization.}\label{analytic-param} The most studied case of the Laplacian growth process requires an additional
analyticity assumption. We devote the present subsection to this
scenario.
%***  Truth is that it's enough if only for $t=0$ this holds
 %- it is the property of (1.2) that negative Fourier coefficients will never appear,
 %if there were none of them initially.  Thus, I would say ***
Assume that for all $t \in [0, T]$ the negative Fourier coefficients
vanish, i.e.,

\begin{equation}\label{no_negative}
a_k(t) = 0, \ \ k<0.
\end{equation}
It is not difficult to see that this will hold the whole evolution, $t \in [0, T]$.  The
equation (\ref{no_negative}) simply means that for a fixed $t$ the function $z(t,w)$ extends
analytically to the unit disk $w \in \mathbb D$. Since $z(t,\cdot)$ is a homeomorphism from
the boundary $\mathbb T = \partial \mathbb D$ to the curve $\partial D(t)$, the argument
principle implies that
$$ z(t,\cdot) : \mathbb D \longrightarrow D(t)$$
is a conformal mapping. Let $w = \Psi(t,z)$ denote the inverse
conformal mapping. Since a linear transformation $z \mapsto \alpha z
+ \beta, |\alpha|=1,$ leaves the equation (\ref{AreaPres})
invariant, we can assume without loss of generality that $z(t,0)=0$
and that $\rho(t) = z'(t,0)>0.$ To distinguish this case from the
general case considered in the previous subsection we shall denote
$\partial D(t)$ by $\Gamma(t)$. The function $p(t,z) = \log
|\Psi(t,z)|$ is, up to a constant factor, the Green function of the
domain $D(t)$, with the source at $z=0$. This means that $p(t,.)$ is
the unique harmonic function in the punctured domain $D(t)\setminus
\{0\}$ having zero boundary values on $\Gamma(t)$ and such that
$p(t,z) - \log|z|$ is harmonic at $z=0$.

Moreover, the harmonic conjugate function $ \arg \Psi(t,z)$ is, up
to an additive constant, equal to $q(z), \ z \in \Gamma(t)$.

In other words, for a fixed value of the parameter $t$, we have:
$$ \nabla^2 p(t,\cdot) = 2 \pi \delta (\cdot),  \ \  {\rm in} \ \ D(t),$$
$$ p(t,\cdot)|_{\Gamma(t)} = 0,$$
and
$$ (q_y(t,\cdot), -q_x(t,\cdot))  = ( p_x(t,\cdot), p_y(t,\cdot)).$$
Consequently, the normal velocity of the boundary equals
$$ V = \frac{\partial q(t,\cdot)}{\partial \ell} = \frac{\partial p(t,\cdot)}{\partial n} .$$
So, by the area conservation property, we have
$$ V dz \wedge d\ell = dz \wedge dq.$$

These equations define a specific dynamics of planar boundaries
known as Laplacian growth. For recent  guides to the mathematics
and physics behind Laplacian growth we refer to the volume
\cite{Mineev2007} and the survey \cite{Mineev-Put-Teo}. We will
return to this case after discussing the geometry of the moving
boundaries.

 \subsection{The Schwarz function} An important tool for studying the changing geometry of the moving boundaries is the Schwarz
 function \cite{Davis}, \cite{Shapiro}.
 Up to the complex conjugation it is simply the (local)
 Schwarz reflection with respect to an analytic curve.

On the real analytic smooth boundary $\Gamma(t)$  of $D(t)$ we
introduce the Schwarz function
$$ \overline{z} = S(t,z),$$
where $S$ is analytic in the variable $z$. The domain of definition
for $S(t,.)$ is at least a tubular neighborhood  of $\Gamma(t)$,
although the function may possess analytic extensions to much larger
sets. For instance, the Schwarz function of a disk centered at $z=a$
and of radius $r$ is the rational function
$$ S(z) = \overline{a} + \frac{r^2}{z-a}.$$
If a polynomial $P(z,\overline{z})$ vanishes on $\Gamma(t)$, then,
necessarily, the associated Schwarz function satisfies the algebraic
equation
$$ P(z, S(t,z)) = 0, \ \ z \in \Gamma(t).$$

\begin{prop} \cite{Howison1992}\label{normalvelocity} The normal velocity of the boundary satisfies
$$ V = \frac{S_t}{2i \sqrt{S_z}},$$
with the proper choice of the branch of the square root, so that
$1/\sqrt{S_z}= dz/d\ell$ along $\Gamma(t)$.
\end{prop}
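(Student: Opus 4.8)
The plan is to derive the velocity formula by combining two facts: first, an expression for the arc-length element $d\ell$ in terms of the Schwarz function; and second, the kinematic meaning of the normal velocity as the rate at which the boundary advances.

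\medskip

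First I would recall that on the curve $\Gamma(t)$ we have the defining identity $\overline{z} = S(t,z)$. Differentiating along the curve, and using that $d\overline{z} = S_z\, dz$ on $\Gamma(t)$, one sees that the unit tangent vector is proportional to $dz$ with $|dz|^2 = dz\, d\overline{z} = S_z\, (dz)^2$. Taking the appropriate branch of the square root, this yields $d\ell = |dz| = \sqrt{S_z}\, dz$, equivalently $dz/d\ell = 1/\sqrt{S_z}$, which fixes the branch convention stated in the proposition. This also identifies the unit tangent $\tau = dz/d\ell$ and hence the outer normal (a rotation of $\tau$ by $-\pi/2$, up to orientation).

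\medskip

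Next I would extract the normal velocity from the time derivative of the Schwarz relation. The normal velocity $V$ is the component along the normal $n$ of the boundary-point velocity, i.e. $V = \Re(\dot{z}\,\overline{n})$ for a point tracked on $\Gamma(t)$. The cleanest route is to differentiate $\overline{z} = S(t,z)$ at a \emph{fixed} point of space lying on the instantaneous interface: holding $z$ fixed and differentiating the level-set relation $\overline{z} - S(t,z) = 0$ in $t$ gives a relation between $S_t$ and the motion of the curve. Concretely, writing the boundary as a moving level set and using that the normal direction is $\nabla(\overline z - S)/|\nabla(\cdots)|$, the standard level-set velocity formula $V = -\,(\partial_t F)/|\nabla_{\mathbb{R}^2} F|$ applied to $F = \overline z - S(t,z)$ produces $V = S_t / |\,\partial_z(\overline z - S)|$-type expression; combining the real and imaginary parts and using the tangent computation above to replace $|\nabla F|$ by $2\sqrt{|S_z|}$ (with the matching branch) collapses the result to $V = S_t/(2i\sqrt{S_z})$.

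\medskip

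I expect the main obstacle to be bookkeeping of branches and factors of $2i$: one must verify that the quantity $S_t/(2i\sqrt{S_z})$ comes out \emph{real} and positive-oriented, which is exactly the content of ``the proper choice of the branch.'' The cleanest way to discharge this is to parametrize $\Gamma(t)$ as $z = z(t,q)$, write $\overline{z(t,q)} = S(t,z(t,q))$, differentiate in $t$ to get $\overline{z_t} = S_t + S_z\, z_t$, and then project onto the normal using $n \,d\ell = -i\,dz = -i\,z_q\,dq$. The relation $d\ell = \sqrt{S_z}\,dz$ lets one eliminate $z_q$ in favor of $\sqrt{S_z}$, and the real part identity $V = \Re(\overline{z_t}\,(-i\,\overline{z_q}/|z_q|)) $ then reduces, after substituting $\overline{z_t} = S_t + S_z z_t$ and cancelling the tangential contribution, to the stated formula. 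Verifying that the $S_z z_t$ term contributes only tangentially (and hence drops out of the normal component) is the one step that requires care, and it is precisely where the Schwarz-function relation $\overline{z_t} - S_z z_t = S_t$ does the essential work.
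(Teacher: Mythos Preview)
Your final paragraph is exactly the paper's approach: differentiate $\overline{z(t,q)}=S(t,z(t,q))$ in $t$ to get $\overline{z_t}=S_t+S_z\,z_t$, then isolate the normal component using $1/\sqrt{S_z}=dz/d\ell$. The paper's finish is a touch cleaner than an explicit projection onto $n$: divide the identity by $\sqrt{S_z}$ and use $|S_z|=1$ on $\Gamma(t)$ (so $\overline{\sqrt{S_z}}=1/\sqrt{S_z}$) to recognize
\[
-\frac{S_t}{\sqrt{S_z}}=\sqrt{S_z}\,z_t-\frac{\overline{z_t}}{\sqrt{S_z}}=-2i\,\Im\!\Big(\frac{\overline{z_t}}{\sqrt{S_z}}\Big),
\]
and since $1/\sqrt{S_z}$ is the unit tangent this imaginary part is precisely $V$. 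This algebraic trick automatically shows the right-hand side is real and discharges your ``bookkeeping of branches and factors of $2i$'' in one line.

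One caution: your middle paragraph's level-set argument with $F=\overline{z}-S(t,z)$ does not work as written, because $F$ is complex-valued and the formula $V=-F_t/|\nabla F|$ applies to a real defining function; you would first need to pass to $\Re F$ or $\Im F$ (or to $|F|^2$), at which point the computation reduces to the parametrized one anyway. You were right to fall back on the parametrized computation, which is the solid path and coincides with the paper's.
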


\begin{proof}
By taking derivatives with respect to $t$ we have
$$ \overline{z_t} = S_t + S_z z_t.$$
When restricted to the boundary curve,
$$S_z = \frac{ d \overline{z}}{d z}$$
is a complex number of modulus one.

Fix a single-valued branch of $\sqrt{S_z}$ along $\partial D(t)$.
This is always possible since $1/\sqrt{S_z}$ equals to the unit
tangent vector to $\partial D(t)$ and, hence, is single valued near
$\partial D(t)$. Then the above equation becomes
$$ \frac{-S_t}{\sqrt{S_z}} = \sqrt{S_z} z_t -
\frac{\overline{z_t}}{\sqrt{S_z}},$$ or, equivalently,
$$  \frac{S_t}{\sqrt{S_z}} = 2i \Im  \frac{\overline{z_t}}{\sqrt{S_z}}.$$
Since $$\frac{1}{\sqrt{S_z}} = \frac{dz}{d\ell}$$ is the unit
tangent vector, then
$$  \frac{S_t}{\sqrt{S_z}} = 2i V$$
%{\it q.e.d.}
\end{proof}

\begin{thm} There exists a multivalued analytic function
$W(t,z)$, defined in a neighborhood of $\partial D(t)$, with the
property
$$ S_t = \partial_z W,$$
and such that $ \Re W $ is constant along $\partial D(t)$.
\end{thm}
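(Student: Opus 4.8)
The plan is to construct $W$ simply by integrating $S_t$ in the $z$-variable, and then to extract the boundary condition from the normal-velocity formula of Proposition~\ref{normalvelocity}; the only genuine subtlety is the multivaluedness, which I address last.

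First I would note that for each fixed $t$ the Schwarz function $S(t,\cdot)$ is analytic in $z$ on a tubular neighborhood of $\partial D(t)$, so the mixed partial commutes and $S_t = \partial_t S$ satisfies $\partial_{\bar z}(\partial_t S)=\partial_t(\partial_{\bar z}S)=0$; hence $S_t$ is again analytic in $z$ on a (possibly slightly smaller) neighborhood. I would then define $W(t,z)$ to be any branch of the primitive $\int^{z} S_t(t,\zeta)\,d\zeta$, so that $\partial_z W = S_t$ holds by construction, giving the first asserted property. Since the neighborhood of $\partial D(t)$ is an annulus rather than a disk, this primitive need not be single-valued: a loop around $\partial D(t)$ changes $W$ by the period $\oint_{\partial D(t)} S_t\,dz$. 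This is exactly why the statement speaks of a \emph{multivalued} analytic function, and it is the point requiring care.

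Next I would compute the arc-length derivative of $W$ along the boundary. Using the chain rule together with the identity $dz/d\ell = 1/\sqrt{S_z}$ recorded just before Proposition~\ref{normalvelocity}, I obtain
\[
\frac{dW}{d\ell} \;=\; \partial_z W \cdot \frac{dz}{d\ell} \;=\; \frac{S_t}{\sqrt{S_z}}.
\]
By Proposition~\ref{normalvelocity} the right-hand side equals $2iV$, where $V$ is the real normal velocity of $\partial D(t)$. Thus $dW/d\ell$ is purely imaginary along the curve, and taking real parts yields $d(\Re W)/d\ell = 0$, so $\Re W$ is locally constant along $\partial D(t)$.

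Finally I would revisit the multivaluedness to confirm the conclusion is meaningful. The same computation shows that the period of $W$ is
\[
\oint_{\partial D(t)} S_t\,dz \;=\; \oint_{\partial D(t)} \frac{dW}{d\ell}\,d\ell \;=\; 2i\oint_{\partial D(t)} V\,d\ell,
\]
which is purely imaginary. Hence, although $W$ itself may change by a nonzero (imaginary) period, its real part $\Re W$ is single-valued on the annular neighborhood and is genuinely constant along the closed curve $\partial D(t)$. I expect the entire argument to be short and essentially mechanical once Proposition~\ref{normalvelocity} is in hand; the main obstacle is purely bookkeeping — verifying that ``$\Re W$ is constant along $\partial D(t)$'' is a well-posed assertion for a possibly multivalued $W$ — and the period computation above settles this by locating the ambiguity entirely in the imaginary part.
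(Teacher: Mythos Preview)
Your argument is correct and follows essentially the same route as the paper: both define $W$ as a primitive of $S_t$ and then use Proposition~\ref{normalvelocity} to see that $S_t/\sqrt{S_z}=2iV$ is purely imaginary along $\Gamma(t)$; the paper phrases this as $\overline{S_t}=\nabla(\Re W)$ being parallel to the normal, while you phrase it as $dW/d\ell$ being purely imaginary, which are equivalent. Your explicit period computation showing that the monodromy of $W$ is purely imaginary---so that $\Re W$ is genuinely single-valued---is a welcome addition that the paper's proof leaves implicit.
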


\begin{proof} From the above computations we find
$$ \overline{S_t} = -2i \overline{\sqrt{S_z}} V = \frac{-2i}{\sqrt{S_z}} V,$$
whence the vector $S_t$ is collinear with the complex conjugate of
the normal to $\Gamma(t)$. Moreover, rewriting the last equation in
the form
$$ \overline{S_t} = \overline{\partial} \overline{W}$$
we infer
$$ |S_t| n = \overline{S_t} = \overline{\partial} \overline{W} = \nabla (\Re W), $$
where $n$ is the normal to $\partial D(t)$. Hence, it follows that
the boundary, $\partial D(t)$ is a level set of $\Re W$.
\end{proof}

\subsection{Laplacian growth} In this subsection we merely illustrate  few classical observations
related to the consequences of the dynamics (\ref{AreaPres}) under
the analyticity assumption $a_k(t) = 0, \ \ k<0, \ t \in [0,T]$.
That is, we assume again that the parametrization $z(t,\cdot)$ of
the curve $\Gamma(t)$ analytically extends to the interior of the
unit disk $\mathbb D$ and will use intensively the Schwarz function
techniques.

By returning to the notations introduced in Section
\ref{analytic-param} we can identify the complex potential $W$ with
the multivalued function $\zeta(t,z) = p(t,z) + iq(t,z) = \log
\Psi(t,z)$ and then study the analytic extension
%picture
of the Schwarz function $S(t,z)$.

\begin{thm} \cite{Howison1992} For every $t$,
there exists a tubular neighborhood $U$ of $\Gamma(t)$, such that
\begin{equation}\label{S_t}
S_t(t,z) = 2 \zeta_z (t,z), \ \ z \in U.
\end{equation}
\end{thm}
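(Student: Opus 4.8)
The plan is to identify the complex potential $W$ from Theorem 2.5 with the logarithmic conformal map potential $\zeta$, and then match the two expressions for the normal velocity obtained in Proposition 2.6 and in the Laplacian growth discussion of Section 2.4.

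First I would record the two formulas for the normal velocity. From Proposition 2.6 we have $V = S_t/(2i\sqrt{S_z})$ along $\Gamma(t)$. On the other hand, under the analyticity assumption the Laplacian growth discussion gives $V = \partial p/\partial n = \partial q/\partial \ell$, where $\zeta = p + iq = \log\Psi$ is the complex potential and $q$ is (up to an additive constant) the boundary parameter. Thus the natural route is to compute $\zeta_z$ along $\Gamma(t)$ and compare it with $S_t/2$. Writing $\zeta_z = p_z + iq_z$ and using the Cauchy--Riemann equations relating $p$ and $q$, the tangential derivative $\partial q/\partial \ell$ and the normal derivative $\partial p/\partial n$ are the real and imaginary parts of a single holomorphic derivative; concretely, $\partial_z \zeta \cdot (dz/d\ell) = \partial \zeta/\partial \ell$, and since $1/\sqrt{S_z} = dz/d\ell$ is the unit tangent, this ties $\zeta_z$ to $V$ through the same factor $\sqrt{S_z}$ that appears in Proposition 2.6.

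The cleanest way to carry this out is: (i) evaluate $\partial\zeta/\partial\ell = \zeta_z\,(dz/d\ell) = \zeta_z/\sqrt{S_z}$ along the boundary; (ii) observe that since $p = \Re\zeta$ vanishes identically on $\Gamma(t)$, the tangential derivative of $\zeta$ is purely imaginary and equals $i\,\partial q/\partial\ell = iV$; hence $\zeta_z/\sqrt{S_z} = iV$, i.e. $\zeta_z = iV\sqrt{S_z}$. (iii) Substitute $V = S_t/(2i\sqrt{S_z})$ from Proposition 2.6 to get $\zeta_z = i\sqrt{S_z}\cdot S_t/(2i\sqrt{S_z}) = S_t/2$, which is precisely $S_t = 2\zeta_z$. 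Theorem 2.5 guarantees that the potential $W$ whose real part is constant on $\partial D(t)$ and which satisfies $S_t = \partial_z W$ is, up to the additive multivalued constant, the function $\zeta$, since $\Re\zeta = p$ is likewise constant (zero) on $\Gamma(t)$; this identification is what makes $\zeta$ the correct $W$. The identity then propagates off the curve: both $S_t$ and $2\zeta_z$ are analytic in $z$ in a common tubular neighborhood $U$ of $\Gamma(t)$, and two analytic functions agreeing on the real analytic curve $\Gamma(t)$ agree on a full neighborhood by the identity principle.

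The main obstacle is the bookkeeping of square-root branches and orientation: one must fix the branch of $\sqrt{S_z}$ exactly as in Proposition 2.6 (so that $1/\sqrt{S_z} = dz/d\ell$ is the unit tangent with the correct orientation) and verify that the sign conventions for the outward normal $n$, the arc length $\ell$, and the Cauchy--Riemann pairing $(q_y,-q_x) = (p_x,p_y)$ are mutually consistent, so that no stray factor of $-1$ or $i$ survives in the final identity. I would treat the passage from the boundary identity to the neighborhood identity as the final, essentially formal, step justified by real analyticity and the identity theorem, and concentrate the care on the boundary computation where the geometric normalizations enter.
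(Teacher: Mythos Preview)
Your argument is correct and essentially identical to the paper's: both start from Proposition~2.3 (which you label 2.6), write $V = S_t/(2i\sqrt{S_z})$, use $1/\sqrt{S_z} = dz/d\ell$, note that $\partial p/\partial\ell = 0$ on $\Gamma(t)$ so that $\partial\zeta/\partial\ell = i\,\partial q/\partial\ell = iV$, and combine to get $S_t = 2\zeta_z$ on the curve. You add the analytic-continuation step to a tubular neighborhood explicitly, which the paper leaves implicit; the detour through the abstract potential $W$ of Theorem~2.4 is not needed but does no harm.
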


Note that the function $\zeta$ is multivalued and analytic in the
punctured domain $D(t)\setminus \{0\}$. Its derivative $\zeta_z$ is
therefore meromorphic there with a simple pole at $z=0$ and the
residue equal to $1$.

\begin{proof}Indeed, according to Proposition \ref{normalvelocity}, we have along $\Gamma(t)$:
$$ \frac{S_t}{2i \sqrt{S_z}} = \frac{\partial p}{\partial n},$$
%whence
so
$$ S_t = \frac{2i}{dz/d\ell}  \frac{\partial p}{\partial n} = \frac{2i (\partial p/\partial n) d\ell}{dz} =$$
$$ 2 \frac{(i\partial q/\partial \ell) d\ell}{dz} = 2  \frac{\partial (p+iq)/\partial \ell) d\ell}{dz} =
2 \frac{\partial \zeta}{\partial z}.$$
\end{proof}

As simple as it looks, equation (\ref{S_t}) has surprising
consequences. In order to unveil them, we start with the known
Plemelj-Privalov-Sokhotsky formula applied for a fixed $t$ to the
function $S(t,\cdot)$. Let
$$ S_{\pm} (t,z) = \frac{1}{2\pi i} \int_{\Gamma(t)} \frac{ \overline{\sigma} d \sigma}{\sigma - z},   \ \ z \in D(t), \ \ {\rm respectively}\ \
z \in \mathbb C \setminus \overline{D(t)}.$$ Then
\begin{equation}\label{SPP}
S(t,z) = S_+(t,z) - S_-(t,z),\ \ \ \  z \in U,
\end{equation}
where $U$ denotes, as before, a neighborhood of $\Gamma(t)$.
Similarly we decompose the function $\zeta_z(t,z)$ and find
$$ (\zeta_z)_-(t,z) = \frac{-1}{z}.$$
From the uniqueness of the above decompositions and $(\ref{S_t})$,
we infer
$$ S_-(t,z)_t = \frac{-2}{z},$$
or, for $z \notin D(t)$,
$$ \frac{d}{dt} \frac{1}{2\pi i}  \int_{\Gamma(t)} \frac{ \overline{\sigma}d \sigma}{\sigma - z} =  \frac{-2}{ z},$$
that is
$$  \frac{d}{dt}  \int_{D(t)} \frac{ d A(\sigma)}{\sigma - z} = \frac{-2}{  z}.$$

By integrating against a polynomial $f$ along the circle $|z|=R$,
with $R$ sufficiently large, we find the following general identity.

\begin{prop} If the parametrization $z(t,w)$ of the boundary of the domain $D(t)$
extends analytically to the interior of the unit disk $\mathbb D :=
\{w: |w|<1\}$, then for every polynomial $f(z)$ the following
identity holds:
$$  \frac{d}{dt}  \int_{D(t)} f(z) d A(z) = 2 f(0).$$
\end{prop}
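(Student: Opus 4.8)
The plan is to start from the pointwise identity just derived,
$$ \frac{d}{dt} \int_{D(t)} \frac{dA(\sigma)}{\sigma - z} = \frac{-2}{z}, \qquad z \notin \overline{D(t)}, $$
and convert it into the stated moment identity by pairing it against polynomials through a contour integral. The key observation is that for a polynomial $f(z)$ and a point $\sigma \in D(t)$, the Cauchy-type integral
$$ \frac{1}{2\pi i} \int_{|z|=R} \frac{f(z)}{\sigma - z}\, dz $$
recovers $f(\sigma)$ up to sign, provided $R$ is large enough that both $D(t)$ and all the relevant contours lie inside the disk of radius $R$. So the first step is to fix $R$ with $\overline{D(t)} \subset \{|z| < R\}$ for all $t$ in a subinterval (using the $C^1$ dependence and the embedding hypothesis to guarantee such a uniform $R$ exists locally), multiply both sides of the pointwise identity by $\frac{1}{2\pi i} f(z)$, and integrate over the circle $|z| = R$.

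Next I would evaluate each side. On the right-hand side, $\frac{1}{2\pi i}\int_{|z|=R} \frac{-2 f(z)}{z}\, dz = -2 f(0)$ by the residue theorem, since $f$ is entire and the only pole is at $z=0$. On the left-hand side I want to interchange $\frac{d}{dt}$ with the $z$-contour integral, and then interchange the $z$-integral with the area integral over $\sigma \in D(t)$, obtaining
$$ \frac{d}{dt} \int_{D(t)} \left[ \frac{1}{2\pi i}\int_{|z|=R} \frac{f(z)}{\sigma - z}\, dz \right] dA(\sigma). $$
For fixed $\sigma \in D(t)$, the inner integral equals $-f(\sigma)$: writing $\frac{1}{\sigma - z} = \frac{-1}{z - \sigma}$ and applying Cauchy's integral formula on $|z|=R$ (which encloses $\sigma$) gives $\frac{1}{2\pi i}\int_{|z|=R}\frac{f(z)}{z-\sigma}dz = f(\sigma)$, hence the bracket is $-f(\sigma)$. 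Combining the two sides and cancelling the minus signs yields exactly
$$ \frac{d}{dt}\int_{D(t)} f(z)\, dA(z) = 2 f(0), $$
which is the claim.

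The main obstacle is justifying the two interchanges of limiting operations. The Fubini swap between the $\sigma$-area integral and the $z$-contour integral is harmless: the integrand $\frac{f(z)}{\sigma - z}$ is jointly continuous and bounded on the compact set $\{(\sigma,z): \sigma \in \overline{D(t)},\, |z|=R\}$ once $R$ is chosen strictly larger than the sup of $|\sigma|$, so there is no singularity on the contour. The more delicate point is moving $\frac{d}{dt}$ inside the $\sigma$-integral over the moving domain $D(t)$; this is precisely where the $C^1$-in-$t$ hypothesis on $z(t,\cdot)$ and the uniform bound away from the contour are used. In practice one circumvents any transport-theorem bookkeeping by noting that the $t$-derivative was already validly passed inside in the preceding display (the identity for $S_-$), so the interchange with the $z$-integral is the only new swap, and that one is justified because the $z$-contour is fixed and the dependence on $t$ enters only through the analytic kernel, uniformly in $z$ on $|z|=R$. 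I would therefore phrase the proof as: integrate the established identity against $f$ on a large fixed circle, apply Cauchy's formula on each side, and invoke the uniform separation of $\Gamma(t)$ from $|z|=R$ to legitimize the elementary interchanges.
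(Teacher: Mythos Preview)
Your proof is correct and follows essentially the same approach as the paper: integrate the already established identity $\frac{d}{dt}\int_{D(t)}\frac{dA(\sigma)}{\sigma-z}=\frac{-2}{z}$ against a polynomial $f$ over a large circle $|z|=R$ and evaluate both sides by Cauchy's formula. Your write-up is in fact more careful about the interchange justifications than the paper, which simply states ``by integrating against a polynomial $f$ along the circle $|z|=R$.''
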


Equivalently, $S_{-}(t,z) = \frac{-2t}{z} + h(z)$, where $h(z)$ is
an analytic function in the neighborhood of $\mathbb C \setminus
\overline{D(t)}$, vanishing at infinity, and independent of $t$. A
simple application of Cauchy's formula now yields that there exists
a complex valued measure $\mu$ supported on a compact set $K \subset
{D(t)}$, independent of $t$ (as proved above) and such that
$$S_{-}(t,w) = \frac{-2t}{z} -\frac{1}{\pi} \int_K \frac{d\mu(\sigma)}{\sigma-z},\ \ \ z \in
\mathbb C \setminus \overline{D(t)}.$$

By repeating the above calculations we find
$$ \int_{D(t)} \frac{dA(\sigma)}{\sigma-z} = \frac{-2t}{z} -\frac{1}{\pi} \int_K \frac{d\mu(\sigma)}{\sigma-z}$$
and, consequently, the following quadrature identity follows.

\begin{cor} Under the same hypotheses as in Proposition $2.6$, for every polynomial $f$ we have
\begin{equation}\label{quadrature}
\int_{D(t)} f(z) dA(z) = 2t f(0) +  \frac{1}{\pi} \int_K f(z)
d\mu(z).
\end{equation}
\end{cor}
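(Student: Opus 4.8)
The starting point is the Cauchy-transform identity obtained just above the statement,
$$\int_{D(t)} \frac{dA(\sigma)}{\sigma - z} = \frac{-2t}{z} - \frac{1}{\pi}\int_K \frac{d\mu(\sigma)}{\sigma - z}, \qquad z \in \mathbb C \setminus \overline{D(t)},$$
which I take as given. The plan is to read off from this single identity all the complex power moments of the area measure on $D(t)$ simultaneously, and then to recombine them against the coefficients of $f$. Both sides are analytic in the unbounded component of the complement of $\overline{D(t)}$ and vanish at $z=\infty$; fix $R$ so large that $\overline{D(t)}$, and hence the support $K$ of $\mu$, lies inside the disk $\{|z|<R\}$.

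First I would expand the Cauchy kernel about infinity: for $|z|>R$ and $\sigma$ in the (compact) support of either measure,
$$\frac{1}{\sigma - z} = -\sum_{n=0}^{\infty} \frac{\sigma^{n}}{z^{n+1}},$$
the geometric series converging uniformly in $\sigma$. Because both measures have compact support I may interchange summation and integration, turning each side of the identity into a convergent Laurent series in $1/z$. Equating the coefficient of $z^{-(n+1)}$ then produces, for every $n\geq 0$, one scalar relation tying the moment $\int_{D(t)} \sigma^{n}\,dA(\sigma)$ to the quantity $2t\,\delta_{n,0}$ arising from the term $-2t/z$ and to the moment $\tfrac{1}{\pi}\int_{K} \sigma^{n}\,d\mu(\sigma)$. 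Equivalently, and with the constants made transparent, I would multiply the identity by $\tfrac{1}{2\pi i}f(z)$ and integrate over the positively oriented circle $|z|=R$, using the residue computation
$$\frac{1}{2\pi i}\oint_{|z|=R} f(z)\int \frac{d\nu(\sigma)}{\sigma - z}\,dz = -\int f(\sigma)\,d\nu(\sigma),$$
valid for any compactly supported $\nu$ and any polynomial $f$, together with $\tfrac{1}{2\pi i}\oint_{|z|=R} f(z)\,\tfrac{dz}{z} = f(0)$ for the point-mass term at the origin.

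Applying this residue identity with $\nu$ taken in turn to be the area measure on $D(t)$ and the measure $\mu$, and summing the coefficient relations against the coefficients of an arbitrary polynomial $f$, collapses by linearity to the single quadrature formula \eqref{quadrature}. Conceptually the whole argument rests on one classical fact: a compactly supported measure is completely determined, as far as its polynomial moments are concerned, by its Cauchy transform on the unbounded component of the complement of its support, so the one displayed identity already encodes the full sequence $\int_{D(t)} \sigma^{n}\,dA$. I do not anticipate a genuine obstacle here; the only points demanding care are the justification of the termwise expansion (immediate from uniform convergence of the geometric series and compactness of the supports) and the bookkeeping of the normalization $dA = \tfrac{1}{\pi}\,dx\wedge dy$, the orientation of $\Gamma(t)$, and the sign fixed in the representation of $S_{-}$, which together pin down the constants $2t$ and $\tfrac{1}{\pi}$ and the sign of the $\mu$-term in \eqref{quadrature}.
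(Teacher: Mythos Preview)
Your proposal is correct and follows essentially the same route as the paper: the paper derives Proposition~2.6 by ``integrating against a polynomial $f$ along the circle $|z|=R$'' and then obtains the corollary by ``repeating the above calculations'' from the displayed Cauchy-transform identity, which is exactly your contour-integration/Laurent-expansion argument. Your explicit residue computation and your caveat about tracking the normalization $dA=\tfrac{1}{\pi}\,dx\wedge dy$ and the sign convention in the representation of $S_-$ are precisely the bookkeeping the paper leaves implicit.
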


Simple examples show that the measure $\mu$ is not unique. If one
insists that the supporting set $K$ is "minimal", and the
representing measure $\mu$ is positive, then one can prove in most
interesting cases the uniqueness of $\mu$. The case of quadrature
domains $D(t)$, corresponding by definition to a positive finite
atomic measure $\mu$, is by far the best understood from the
constructive point of view. In this case the conformal mappings
$z(t,w)$ are rational. Examples, a discussion of the alluded
uniqueness and further details and references can be found in the
collection of articles \cite{QD}, cf. also  \cite{Gu1990},
cite{Shapiro}.

\section{Elliptic growth}

Guided by Laplacian growth as a prototype,  we introduce in this
section the elliptic growth phenomenon mentioned in the
Introduction. It is surprising to see that many features of
Laplacian growth persist and yet  sharp differences occur.  Let us
start the formulation in arbitrary dimension $d$ for a possibly
multiply connected domain $D(t)$ in ${\mathbb R}^d$ with many
sources, but later on we will focus on a homotopically trivial 2D
case with a single source at the origin in more detail.

Consider a family $D(t)$ of bounded domains in ${\mathbb R}^d$
with smooth analytic boundaries. Moreover, dependence of $D(t)$ on
$t$ is assumed to be real analytic (in the sense of a chosen
parametrization) as well.

Let $G$ be an open set containing as relative compact subsets all
$D(t), -1 < t <1$, and let $\lambda: G \longrightarrow \mathbf (0,
\infty)$ be a real analytic function. We consider the elliptic
(non-positive) differential operator
$$ L = \nabla \cdot \lambda \nabla = {\rm div} (\lambda \, {\rm grad}).$$
As we noted in  Section 1.1, when there is no danger for confusion
we shall omit the dot in the notation, and write, for example,
$\Delta = \nabla^2$.

The moving boundary problem with $N$ sources $s_k$ at ${{\bf x}_k
\in D(t)}$ is the
following:\\

{\it Given $D(0)$, find domains $D(t)$  satisfying the system of
equations}:

$$ L\,p = \sum_{k=1}^N s_k \delta({\bf x} - {\bf x}_k)   \ \ \  {\rm in }\ \ \ D(t),$$

$$p|_{\partial D(t)} = 0,$$

$$V = \lambda \,\partial_n p \ \   {\rm on}\ \ \partial D(t). \ \ \ $$

Note that the first two conditions simply assert that $p$ is the
linear combination of the Green functions for the operator $L$ of
the domain $D(t)$, with singularities at ${\bf x_k}$, while the
third condition determines the dynamics of the moving boundary.

%Above, and throughout this note, $V$ denotes the normal component of
%the velocity of the boundary, and $n$ is the unit outer normal
%vector to $\partial D(t)$.

\begin{thm}\label{quadrature} For every  function $\psi \in C^2(G)$ satisfying $L\psi = 0$, we
have
$$ \frac{d}{dt} \int_{D(t)} \,\psi \,{\rm d Vol} =  \sum_{k=1}^N s_k \psi({{\bf x}_k}).$$
\end{thm}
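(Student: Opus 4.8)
The plan is to differentiate the volume integral by the transport (Reynolds) theorem, convert the resulting boundary flux into an interior divergence integral via the Green identity adapted to $L$, and then exploit the three defining conditions of the elliptic growth. Since $\psi$ does not depend on $t$, the only contribution to $\frac{d}{dt}\int_{D(t)} \psi\, d{\rm Vol}$ comes from the motion of the boundary, so
$$ \frac{d}{dt}\int_{D(t)} \psi\, d{\rm Vol} = \int_{\partial D(t)} \psi\, V\, dS, $$
where $V$ is the normal velocity and $dS$ the surface measure on $\partial D(t)$. Substituting the dynamical law $V = \lambda\, \partial_n p$ reduces the claim to the identity
$$ \int_{\partial D(t)} \lambda\, \psi\, \partial_n p \, dS = \sum_{k=1}^N s_k\, \psi({\bf x}_k). $$

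The key tool is the second Green identity for the operator $L = \nabla \cdot \lambda \nabla$, namely
$$ \int_{D} (\psi\, Lp - p\, L\psi)\, d{\rm Vol} = \int_{\partial D} \lambda\, (\psi\, \partial_n p - p\, \partial_n \psi)\, dS, $$
which follows from two integrations by parts, after which the symmetric interior term $\int_D \lambda\, \nabla\psi\cdot\nabla p\, d{\rm Vol}$ cancels. On the right-hand side the boundary condition $p|_{\partial D(t)} = 0$ kills the term $\lambda\, p\, \partial_n \psi$, leaving exactly the flux $\int_{\partial D(t)} \lambda\, \psi\, \partial_n p\, dS$. On the left-hand side the hypothesis $L\psi = 0$ removes the term $p\, L\psi$, while the source equation $Lp = \sum_k s_k \delta(\cdot - {\bf x}_k)$ collapses $\int_D \psi\, Lp\, d{\rm Vol}$ to $\sum_k s_k \psi({\bf x}_k)$ by the sifting property of the Dirac masses. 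Chaining these equalities yields the assertion.

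The hard part will be justifying the Green identity in the presence of the point singularities of $p$ at the sources ${\bf x}_k$, where $p$ behaves like the fundamental solution of $L$ and fails to be $C^2$. I would make this rigorous by excising small balls $B_\varepsilon({\bf x}_k)$, applying the classical singularity-free Green identity on the punctured domain $D(t)\setminus \bigcup_k \overline{B_\varepsilon({\bf x}_k)}$, and letting $\varepsilon \to 0$. The boundary terms on the spheres $\partial B_\varepsilon({\bf x}_k)$ are controlled by the local structure of the Green function: since $\psi$ is $C^2$, hence bounded with bounded gradient near each ${\bf x}_k$, and the leading singularity of $\lambda\, \partial_n p$ integrates over the sphere to the source strength $s_k$, the sphere integrals converge to $s_k \psi({\bf x}_k)$, which is precisely the distributional sifting property invoked above. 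A secondary point to verify is the legitimacy of the transport theorem under the standing real-analytic regularity of the family $D(t)$; this is routine once the boundary velocity is identified with $V$.
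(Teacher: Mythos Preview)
Your proof is correct and follows essentially the same route as the paper: transport theorem, then the symmetric Green identity for $L=\nabla\cdot\lambda\nabla$ together with the three conditions $V=\lambda\,\partial_n p$, $p|_{\partial D(t)}=0$, and $L\psi=0$. The paper's write-up is terser and treats the equation $Lp=\sum_k s_k\delta_{{\bf x}_k}$ distributionally without excising balls, whereas your added paragraph on removing $B_\varepsilon({\bf x}_k)$ and passing to the limit supplies the standard justification that the paper leaves implicit.
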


Here, dVol stands for Lebesgue measure on $D(t)$.

%Henceforth we denote the complex variable by $z = x+iy$ and the %(normalized)
%area measure by

%$$ dA = \frac{1}{\pi} dx \wedge dy = \frac{1}{2 \pi i} d\overline{z}
%\wedge dz.$$

Since the constant function $\psi = \mathbf 1$ is annihilated by the
operator $L$, the above formula implies
$$ \frac{d}{dt} {\rm Vol}\,{D(t)} = \sum_{k=1}^N s_k, \ \ |t|<1.$$
Thus, in this moving boundary process, the volume is still
proportional to time.

\begin{proof} Let
%$d \ell$
$d\Gamma$ denote the
%arc length
surface element on each connected component of the boundary of
$D(t)$. Then,
%, in view of relation (\ref{AreaPres}),
we have:

$$ \frac{d}{dt} \int_{D(t)} \psi d {\rm Vol} = \int_{\partial D(t)} \psi V
d\Gamma =  \int_{\partial D(t)} (\psi \lambda \partial_n p - p
\lambda
\partial_n \psi) d\Gamma =
$$
$$\int_{\partial D(t)} (\psi \lambda \nabla p - p \lambda \nabla
\psi)\cdot n\, d\Gamma=  \int_{D(t)} \nabla\cdot(\psi \lambda \nabla
p - p \lambda \nabla \psi) d {\rm Vol} =$$
$$ \int_{D(t)} [\psi L p- p L \psi] d  {\rm Vol} =  \sum_{k=1}^N s_k \psi({{\bf x}_k}).$$

\end{proof}

\begin{cor} In the case when the domains $D(t)$ are all homeomorphic to a ball and contain a single source $s_1>0$,
the moments
$$C(\psi) =  \int \psi \, d {\rm Vol}, \ \ \ L\,\psi =0,$$ determine  the domains $D(t)$ (locally
in $t$).\end{cor}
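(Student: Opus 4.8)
The plan is to prove the sharper statement that the moment map $D \mapsto \{C(\psi): L\psi=0\}$ is locally injective on the class of domains homeomorphic to a ball with real-analytic boundary; the stated determination of the family $D(t)$ then follows, since by the volume-moment identity proved above the $t$-evolution of every moment is pinned, $\frac{d}{dt}C(\psi) = s_1\psi(\mathbf{x}_1)$, so the only remaining freedom is the shape of the domain at each fixed $t$. I would first reduce the comparison of two candidate domains $D,\tilde D$ to a comparison of their $L$-volume potentials. Let $E(\mathbf{x},\mathbf{y})$ be the fundamental solution of $L$; for a pole $\mathbf{y}$ lying outside $\overline{D\cup\tilde D}$ the function $E(\cdot,\mathbf{y})$ is $L$-harmonic in a neighborhood of both closures, so equality of the moments gives $U^D(\mathbf{y})=U^{\tilde D}(\mathbf{y})$, where $U^D(\mathbf{y})=\int_D E(\mathbf{x},\mathbf{y})\,d\mathrm{Vol}(\mathbf{x})$. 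Thus the exterior $L$-potentials of the two domains coincide on the common unbounded component.

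Setting $v=U^D-U^{\tilde D}$, the identity $L_{\mathbf{y}}E=\delta$ together with the symmetry of $E$ yields $Lv=\chi_D-\chi_{\tilde D}$ (up to a harmless sign) in the distributional sense, while $v$ vanishes identically on the common exterior. By interior elliptic regularity $v$ is real-analytic off the symmetric difference $D\triangle\tilde D$ and carries vanishing Cauchy data, $v=\partial_n v=0$, along the part of $\partial D\cup\partial\tilde D$ facing that exterior. Unique continuation for the elliptic operator $L$ then propagates $v\equiv 0$ across these arcs; for $\tilde D$ sufficiently close to $D$ this forces $D\triangle\tilde D=\emptyset$, that is $D=\tilde D$.

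To make the local assertion quantitative I would, in addition, linearize. A normal displacement of $\partial D$ by $\epsilon f$ changes each moment by $\frac{d}{d\epsilon}\Big|_{\epsilon=0}\int_{D_\epsilon}\psi\,d\mathrm{Vol}=\int_{\partial D}\psi f\,d\Gamma$, so the derivative of the moment map annihilates $f$ exactly when $\int_{\partial D}\psi f\,d\Gamma=0$ for every $\psi$ with $L\psi=0$. The decisive point is a Runge-type density theorem: because $D$ is homeomorphic to a ball, $G\setminus\overline{D}$ is connected, and the boundary traces $\psi|_{\partial D}$ of $L$-harmonic functions are dense in $L^2(\partial D,d\Gamma)$. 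Hence $f=0$, the linearized moment map is injective, and an implicit-function-theorem argument in a suitable H\"older or Sobolev scale upgrades this to local injectivity of $D\mapsto\{C(\psi)\}$.

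I expect the density (Runge) step, and the companion unique-continuation step, to be the main obstacle. I would establish density by duality: if $f$ is orthogonal to all such traces, its single-layer potential $Sf(\mathbf{y})=\int_{\partial D}E(\mathbf{x},\mathbf{y})f\,d\Gamma$ is $L$-harmonic off $\partial D$, and the orthogonality relations force its Cauchy data to vanish on the connected exterior; unique continuation kills $Sf$ there, and the jump relations for the single layer give $f=0$. The technical care lies in the regularity theory of the fundamental solution for the variable-coefficient operator $L$ and in choosing function spaces in which the moment map is smooth; the analyticity of $\partial D(t)$ and the single positive source are precisely what keep the whole picture real-analytic and give meaning to determination \emph{locally in} $t$.
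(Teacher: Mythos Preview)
Your proposal attacks a much harder theorem than the one the paper actually states and proves. You read ``the moments determine the domains $D(t)$'' as an inverse moment problem: two arbitrary domains with the same $L$-harmonic moments must coincide. The paper, by contrast, reads it \emph{within the already-given one-parameter family} $\{D(t)\}$: knowing the moment values picks out the correct value of $t$, hence the correct member of the family. With that reading the proof is one line. Since $\psi\equiv 1$ satisfies $L\psi=0$, Theorem~3.1 gives
\[
\frac{d}{dt}\,\mathrm{Vol}(D(t))=\int_{\partial D(t)} V\,d\Gamma=s_1>0,
\]
so the single moment $C(\mathbf{1})=\mathrm{Vol}(D(t))$ is strictly increasing in $t$; together with the observation that the family is increasing by inclusion, this already determines $D(t)$ locally in $t$. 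No potential-theoretic machinery, no Runge approximation, no unique continuation is needed.

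Your sketch is a credible outline of a Novikov--Isakov type local uniqueness theorem for the $L$-potential inverse problem, and the linearization plus Runge-density argument is the right engine for that. What it buys is a genuinely stronger conclusion: injectivity of the moment map on a full neighborhood in the space of domains, not just along the given curve $t\mapsto D(t)$. But two cautions if you pursue it. First, the passage ``unique continuation \ldots\ for $\tilde D$ sufficiently close to $D$ this forces $D\triangle\tilde D=\emptyset$'' is not a proof; unique continuation only propagates $v\equiv 0$ through the region where $Lv=0$, and one still needs an argument (maximum principle on components of $\tilde D\setminus D$, or the linearized route you describe next) to kill the symmetric difference. Second, the implicit-function-theorem step requires not just injectivity of the derivative but a closed-range/Fredholm statement in the chosen scale, which you have not addressed. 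None of this is fatal, but it is real work---and, for the corollary as the paper intends it, entirely unnecessary.
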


\begin{proof} Indeed, it is sufficient to consider a single moment $C(\mathbf 1)$.
As remarked earlier,
$$ \frac{d {\rm Vol}(D(t))}{dt} =  \int _{\partial D(t)} V d \Gamma = s_1>0$$
and the corollary follows, after observing that the family $D(t)$ is
increasing with respect to the  ordering by inclusion.
\end{proof}

{\bf Remark.}  Theorem \ref{quadrature} and the Corollary extend
word for word to more general elliptic operators
$$ L = {\rm div}\, (\Lambda \,{\rm grad}) - u({\bf x}), $$
where the matrix $\Lambda= (\lambda_{i,j}({\bf x})_{i,j=1,2})$ is
uniformly elliptic on the domain $G$ and all the coefficients
$\lambda_{i,j}, u({\bf x})$ are assumed to be real analytic in $G$
and $u \geq 0$. The only modification needed is that in the last
boundary condition in (1.1) where one ought to require $V =
\Lambda(\nabla p) \cdot n$. The existence of the Green function $p$
for such operators is well known \cite{PW, Evans}.  The fact that
$V>0$ on $\partial D(t)$ then follows from the maximum principle and
Hopf's lemma which hold for such operators cf. \cite{PW, Evans,
Reichelt}.

Assuming that the sources strengths $s_k(t)$ depend on time we then
obtain another notable corollary  of Theorem \ref{quadrature}. The
functionals $\int_{D(t)} \psi d\mu$ do not depend on $s_k(t)$, but
only on the value of the integral $\int_0^t\,s(t)\,dt$
\cite{Richardson1981,Varchenko, Gu1990}.

The functionals $\int_{D(t)} \psi d{\rm Vol}$ have a remarkable
potential theoretic interpretation \cite{Varchenko,
Mineev-Weinstein1993}. Indeed, imagine that a domain $D(t)$ is
occupied by matter with a unit density, which creates the potential
$\Phi$, governed by the Poisson's equation
$$L\Phi = \chi_{D(t)},$$
where $\chi_D$ is the characteristic function of the domain $D$. A
solution of the last equation is
$$\Phi({\bf x}) = \int_{D(t)}\,G_0({\bf x},{\bf y})\,d{\rm Vol(y)},$$
where $G_0({\bf x},{\bf y})$ is the fundamental solution for the
operator $L$. In important particular cases which are relevant for
physical applications (for instance, for the Helmholtz operator
$\Delta-1$) it is possible to expand $G_0({\bf x},{\bf y})$ into the
series
$$G_0({\bf x},{\bf y}) = \sum_n {\tilde \psi_n}({\bf x}) \psi_n({\bf y}),\ \  x\notin D(t), y \in D(t),$$
where $\{\psi_n\}$ and $\{\tilde{\psi}_n\}$ are bases of the null
space of $L$ in $D(t)$ and its complement in ${\mathbf R}^d$
respectively.  Then, assuming commutativity of summation and
integration, we obtain
$$\Phi({\bf x}) = \sum_n {\tilde \psi_n}({\bf x}) \int_{D(t)}\, \psi_n({\bf
y})\,d{\rm Vol(y)}.$$ Therefore,  we have obtained the functionals
introduced in the Theorem \ref{quadrature} as the coefficients of
the multi-pole expansion of the non-Newtonian potential given in a
far field.  We would like to add that the gradient of $\Phi$ is a
generalization of the Cauchy transform for the domain $D(t)$ - the
notion that was so useful in the Laplacian growth and
 the related field of quadrature domains.

In the case of elliptic growth with nonhomogeneous density mentioned
in the Introduction, one should modify the formulation given in the
beginning of this section by adding a positive space-dependent
factor, $\rho$, namely,
$$\lambda \rightarrow \rho \lambda;$$
under these assumptions we still encounter  an infinite set of
conservation laws similar to the previous case, when $\rho = 1$,
namely:
$$\frac{d}{dt} \int_{D(t)}\,\frac{\psi}{\rho} d{\rm Vol} =  \sum_{k=1}^N q_k \psi({{\bf x}_k}).$$
The elementary proof is not included here.  From the point of view
of  potential theory this corresponds to the case of occupation of
the domain $D(t)$ by non-uniform matter with density $1/\rho({\bf
x})$.  Some aspects of this case in 2D were discussed in
\cite{ZabrodinSigma}, also cf. \cite{Reichelt}. It is clear that the
inverse potential problem of recovery of $D(t)$ is considerably more
difficult in this situation.

\section{The conjugate function}

From now on we return to a planar case with a single source of
strength $s_1=2\pi$ and assume
 that all domains $D(t), \ 0 \leq  t \leq  T,$ are
{\it simply connected}. Define, using the notations from the
previous section, a (multivalued) {\it conjugate function} $ q \in
C^\omega(D(t))$ by
\begin{eqnarray}
 q_y & = &\lambda p_x \\
 q_x & = & -\lambda p_y.
\end{eqnarray}
Accordingly,
$$ \nabla \cdot \frac{1}{\lambda} \nabla q = 0, \ \ \ {\rm in} \ \
D(t).$$ In most computations below $t$ is fixed. However, we stress
that by its very definition, the function $q$ depends on $t$ also:
$q(z, \overline{z})= q(t; z, \overline{z}).$ We hope that omitting
$t$ in the notations of $q$ will not  confuse  the reader.

We have
$$ V = \lambda \partial_n p = \partial_{\ell}\, q.$$

Let $z(t,\ell)$ be the parametrization of the contour
$\Gamma(t)=\partial D(t)$ by an arc-length $\ell$.  Then a right
angle rotation of the unit tangent vector gives
$$ n = -i z_{\ell},$$

hence the normal component of the boundary velocity is
$$ V = z_t \cdot (-iz_{\ell}) = \Re
(z_t (\overline{-iz_\ell})) = \Im(\overline{z_t}z_{\ell}) =
\partial_{\ell}\, q,$$ where, as before,
subscripts stand for partial derivatives.

Since $\lambda$ is positive, then $ \partial_{\ell}\, q>0$ along
$\Gamma(t)$, therefore $q(t,.)$ can equally well parameterize the
boundary $\partial D(t)=\Gamma(t)$, in which case we can rewrite the
above equation as
$$ \Im (\overline{z_t} z_q) = 1,\ \ {\rm on} \ \ \partial D(t).$$

\begin{lem} The variation of $q$ along the curve $\Gamma(t)$ is equal to $2\pi$.
\end{lem}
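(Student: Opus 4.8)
The plan is to identify the variation of $q$ around $\Gamma(t)$ with the flux of the vector field $\lambda\nabla p$ through the boundary, and then to evaluate that flux. By the relation established just above the statement, $\partial_{\ell}\, q = V = \lambda\,\partial_n p$ along $\Gamma(t)$, so the total increment of $q$ as one traverses the closed curve once is
$$ \oint_{\Gamma(t)} dq = \oint_{\Gamma(t)} \partial_{\ell}\, q \, d\ell = \oint_{\Gamma(t)} \lambda\,\partial_n p \, d\ell = \oint_{\Gamma(t)} (\lambda\nabla p)\cdot n \, d\ell. $$
First I would emphasize that $q$ is genuinely multivalued — it is the analogue of $\arg\Psi$ from the Laplacian case — so ``variation along $\Gamma(t)$'' means precisely this well-defined period $\oint dq$, and the task is to show it equals $2\pi$.

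Next, since $L p = \nabla\cdot(\lambda\nabla p) = 2\pi\,\delta$ vanishes away from the source at the origin, I would excise a small disk $B_\epsilon(0)$ and apply the divergence theorem on $D(t)\setminus B_\epsilon(0)$, where $\lambda\nabla p$ is smooth and divergence-free. This reduces the flux across $\Gamma(t)$ to the flux across the small circle:
$$ \oint_{\Gamma(t)} (\lambda\nabla p)\cdot n \, d\ell = \oint_{\partial B_\epsilon(0)} (\lambda\nabla p)\cdot n \, d\ell. $$
The final step is to evaluate the right-hand side as $\epsilon\to 0$. Near the origin $\lambda \approx \lambda(0)$ and $p$ carries the logarithmic singularity $p \sim \frac{1}{\lambda(0)}\log|z|$ dictated by the source of strength $2\pi$; hence $\lambda\,\partial_r p \sim 1/|z|$ and the flux across the circle of radius $\epsilon$ limits to $2\pi$, independently of $\epsilon$, yielding the claim.

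The main obstacle is the careful treatment of the singularity: one must verify that the bounded (subleading) part of $p$ contributes nothing to the flux in the limit, and that the leading coefficient is exactly $1/\lambda(0)$ so that the factor $\lambda$ cancels against $\partial_r p$. Equivalently, and more cleanly, I would bypass the local analysis entirely by invoking the already-proven consequence of Theorem \ref{quadrature}: with a single source of strength $s_1 = 2\pi$ the proof there gives $\frac{d}{dt}{\rm Vol}(D(t)) = \oint_{\Gamma(t)} V \, d\ell = s_1 = 2\pi$. Since the variation of $q$ is exactly the same integral $\oint_{\Gamma(t)} V \, d\ell$, the result follows immediately. I would present the flux computation as the conceptual explanation and cite Theorem \ref{quadrature} to make the argument rigorous and self-contained.
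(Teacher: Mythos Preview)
Your proof is correct and follows essentially the same route as the paper: express the variation as $\int_{\Gamma(t)} \partial_\ell q\,d\ell = \int_{\Gamma(t)} \lambda\,\partial_n p\,d\ell$, then apply the divergence theorem to obtain $\int_{D(t)} \nabla\cdot(\lambda\nabla p)\,d\mathrm{Area} = 2\pi$. The only difference is that the paper applies the divergence theorem directly in the distributional sense, while you excise a small disk around the source and pass to the limit (or, alternatively, invoke Theorem~\ref{quadrature}); these are just two ways of handling the same singularity, not genuinely different approaches.
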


\begin{proof} Indeed
$$ {\rm var}\  q|_{\Gamma(t)} = \int_{\Gamma(t)} \frac{\partial q}{\partial \ell} d \ell =
\int_{\Gamma(t)} \frac{\lambda \partial p}{\partial n} d \ell = $$
$$\int_{\Gamma(t)} \lambda \nabla p \cdot n d\ell =
 \int_{D(t)} \nabla(\lambda \nabla p)  d Area = 2\pi.$$
\end{proof}

Introducing the multivalued function

$$ \zeta(z,\overline{z}) = p+iq,$$we have
%$$q_x = -\lambda p_y,\ \ \ q_y =  \lambda p_x,$$
%or equivalently
$$ i \lambda \overline{\partial} p =
\overline{\partial} q,$$ or, still,
\begin{equation}\label{Beltrami}
{\bar \partial} \overline{\zeta} = \frac{1+\lambda}{1-\lambda} {\bar
\partial
\zeta},
\end{equation}
which is a form of the Beltrami equation \cite{Vekua}.  In terms of
a new variable
$$ \omega = \sqrt \lambda p + i\frac{q}{\sqrt \lambda},$$
the equation (\ref{Beltrami}) takes the canonical Carleman form
\begin{equation}\label{Carleman}
{\bar \partial} {\omega} = ({\bar
\partial} \log \sqrt \lambda)\,\overline{ \omega}.
\end{equation}
An implicit solution of this equation is found from
$${\omega}(z,\bar z) = F(z) \exp{\int_{D(t)}\frac{{\bar
\partial} \log
(\sqrt{\lambda(\zeta, \bar \zeta)})}{\zeta - z}\, \frac{
\overline{\omega( \zeta, \overline{\zeta})}}{\omega(\zeta, \bar
\zeta)} \, dA(\zeta)},$$

where $F(z)$ is analytic in $D(t)$ (cf. \cite{Vekua} for more
details).

Thus the moving boundary problem of finding $D(t)$ can be
reformulated as a Dirichlet boundary value problem:\\

{\it Given the weight $\lambda$, find a function $\zeta(z,\bar z)$
(or, $\omega(z, \bar z))$ satisfying the Beltrami (or, Carleman)
equation  above and subject to the boundary condition $\Re
\zeta = 0$ (or, $\Re \omega = 0$)}.\\

\section{Elliptic growth of Schr\"odinger type}
The previous section was devoted to the elliptic growth of the
Beltrami type.  In this section, we will consider the elliptic
growth of Schr\"odinger type, which, as already mentioned in the
introduction, is related to the theory of the Schr\"odinger
operator. To be specific, we consider the problem:

{\it Given a domain $D(0)$ find domains $D(t)$ , satisfying the
system of equations}:
\begin{eqnarray}
L\,P &=& (\nabla^2 - u)P = \sum_{k=0}^N q_k \delta({\bf x} - {\bf x}_k)   \ \ \  {\rm in }\ \ \ D(t), \nonumber \\
P    &=& 0 \ \ \ \quad \,\,{\rm on}\,\,\partial D(t), \nonumber \\
V    &=& \partial_n P \ \ \ {\rm on}\,\,\partial D(t)\nonumber .
\end{eqnarray}
As has already been demonstrated, this problem has an infinite set
of conservation laws which are time derivatives of integrals of null
vectors of the operator $L$. These integrals have the
potential-theoretic interpretation discussed in Section 3.  Let us
pose the following question. Does there exist in this case a
function $Q$ ``conjugate'' w.r.t. to $P$ in the sense that $P$ and
$Q$ are connected via some generalized Cauchy-Riemann equations?
And, if such $Q$ exists, can it be used as a parametrization of the
moving contour $\partial D(t)$ similarly to how  it was used in the
case of the elliptic growth of the Beltrami type in the previous
section?

The answer to the first question is `yes', and to the second one -
`no'.  One can see this from the generalized Cauchy-Riemann
conditions which connect $p$ and $q$ in the Beltrami case. Namely,
\begin{eqnarray}
\lambda \partial_x p &=& \partial_y q, \nonumber \\
\lambda \partial_y p &=&-\partial_x q \nonumber.
\end{eqnarray}
These formulae suggest the substitution
$$p = \frac{P}{\sqrt \lambda},$$
$$q = \sqrt \lambda Q.$$

Thus, the new functions  $P$ and $Q$ are connected via the system of
linear euqations:
$$\partial_x P - P \partial_x (\log \sqrt \lambda)= \partial_y Q + \partial_y (\log \sqrt \lambda),$$
$$\partial_y P - P \partial_y (\log \sqrt \lambda)= -\partial_x Q - \partial_x (\log \sqrt \lambda).$$

Differentiating the first equation w.r.t. to $x$, the second one
w.r.t. to $y$, and then adding them, one obtains
$$(\nabla^2 - u)P = 0,$$
$$(\nabla^2 - v)Q = 0,$$
where
$$u= \frac{\nabla^2 (\lambda^{1/2})}{\lambda^{1/2}};$$
$$v= \frac{\nabla^2 (\lambda^{-1/2})}{\lambda^{-1/2}}.$$
This simple transformation known  as the ``removal of the first
derivative from linear differential equations of the second order''
and, also, closely related to supersymmetry in physics, joints
together the two major types of elliptic growth.

Now let us show that the function $Q$, unlike the function $q$
cannot, in general, provide a parametrization of the contour.
Indeed, it was shown in the previous section that $q$ can serve as a
parametrization since it is a monotonically increasing function of
the arc-length along the contour. Since $q = \sqrt \lambda Q$, it is
now clear that $Q$, generally speaking, is not monotone along the
interface because of the space dependent factor $\lambda^{-1/2}$.

As one can easily see, $\sqrt \lambda$ solves the same Schr\"odinger
equation as $P$, namely
$$(\nabla^2 - u)\sqrt \lambda = 0,$$
while the function $1/\sqrt \lambda$ solves the same Schr\"odinger
equation as $Q$, namely
$$(\nabla^2 - v)\frac{1}{\sqrt\lambda} = 0.$$

\section{An inverse problem} We address below the following natural question:\\

{\it  Is it possible to have the same "movie" $(t, \Gamma(t)), \ t
\in [0,T]$ governed by the elliptic growth dynamics
with different weights $\lambda$?}\\

By studying a particular example we shall demonstrate  that, indeed, such\\
non-uniqueness may take place. To fix the ideas, assume that the
elliptic growth dynamics, as specified above has the property that
the conformal map $z(t, \cdot)$ onto a neighborhood $U$ of
$\Gamma(t)$:
$$ z(t,\cdot) : \{ w; 1-\epsilon < |w| <1+\epsilon\} \longrightarrow U$$
extends analytically to the unit disk $|w|<1$. Then, as we saw
earlier, the equation (\ref{AreaPres}) once again governs the whole
evolution of $\Gamma(t)$ but this time the evolution of $\Gamma(t)$
is the Laplacian growth.  In particular, the parameter along each
boundary satisfies $q(t,z) = \Im \log w(t,z)$, that is $\nabla^2 q =
0$ at all points  except the isolated singularity. On the other hand
we have started with the assumption $\nabla \lambda^{-1} \nabla q =
0$. Hence
$$ \lambda^{-1} \nabla^2 q - \lambda^{-2} (\nabla \lambda)\cdot (\nabla q) = 0,$$
so
$$ (\nabla \lambda)\cdot (\nabla q) = 0.$$
The  function $p(t,z)$ also, by its very definition, satisfies
$$( \nabla p)\cdot (\nabla q) = 0.$$
Since $\nabla p$ never vanishes, at least in a neighborhood of
$\Gamma(t)$, a functional dependence
$$ \lambda(z) = f(t, p(t,z))$$
must occur. This simple computation gives rise to the following
counterexample.

Let us consider the simplest Laplacian growth process of expanding
concentric discs
$$ D(t) = D(0, \sqrt{2t}), \ \ t >0,$$
with associated functions
$$ p(t,z) = \log \frac{|z|}{\sqrt{2t}}, \ \ q(t,z) = \arg z.$$
Let $r = |z|$ and choose any positive, smooth radial function
$\lambda(r)$, e.g., $\lambda(r) = \exp r.$ Then, everywhere in
$\mathbb C \setminus \{0\}$, we find
$$ \nabla \lambda(r)^{-1} \nabla q = \lambda(r)^{-1} \Delta q - \lambda^{-2} \nabla \lambda \cdot \nabla q
= 0.$$ Now we can solve the problem with a rotationally symmetric
function $p(t, r)$:
$$ \nabla \lambda(r) \nabla p = 2\pi \delta(0),\ \ \ \ p(t, \sqrt{2t}) = 0,$$
so that $p,q$ are $L$-conjugate, where $L = \nabla \lambda \nabla$.
In particular, the normal velocity of the boundary is given by
$$ \frac{\partial q}{\partial \ell} = \lambda \frac{\partial p}{\partial n},$$
thus it is independent of the choice of $\lambda$. That is, the same
movie can be described as the Laplacian growth with $\lambda =1$ and
as elliptic growth with any other positive weight which is
rotationally invariant.

Note that a point on the boundary $\Gamma(t) = \partial
D(0,\sqrt{t})$ is parametrized as
$$ z(t,q) = \sqrt{2t} e^{iq},$$ so that the associated conformal map is
$$ z(t,w) = \sqrt{2t} w,\ \ \ |w|=1.$$   On the other side, if we choose
an initial contour to be different from the level set of the
function $\lambda$, the inverse problem may have a unique solution
modulo an arbitrary function of $\lambda$.  Obviously, the simplest
way to verify if the ``movie'' in the previous example is the
Laplacian growth, is to run the ``movie'' again but with a
non-circular initial configuration. Then the ambiguity of the
previous example should disappear.  But this way of removing
non-uniqueness will require two ``movies''. Thus, it is probably
correct to say that there a continuum of $\lambda$'s, which
correspond to the same ``movie'', and that there are two sources of
non-uniqueness: (i) any smooth function of $\lambda$ can replace the
original function $\lambda({\bf x})$ and (ii) any smooth function of
$p$ can be multiplied by the original $\lambda({\bf x})$ without a
change of the ``movie''.  This is an interesting feature that merits
further investigation. Here, we have simply pointed out (by
constructing an example) the non-uniqueness of elliptic operators
$L$ providing the same evolution (i.e., a ``movie'').

\section{Schwarz function in elliptic growth}
It is instructive to state the moving boundary problem in terms of
the Schwarz function $S(z)$.  Start with the equation for the
velocity of the moving boundary $V =
\partial_{\ell}q$. By Proposition 2.3 its left hand side equals
$S_t/(2i\sqrt{S_z})$, while the right hand side can be written as
$\partial_{\ell} (\zeta - p)/i = \partial_{\ell}\,\zeta/i$ since
$\partial_{\ell} p = 0.$ (Here, as before $\zeta = p+iq$.) Then,

$$\zeta_{\ell}/i = (z_{\ell}\partial_z \zeta + {\bar
z}_{\ell}\partial_{\bar z}\zeta)/i = (\partial_z \zeta + S_z
\partial_{\bar z} \zeta)/i\sqrt S_z.$$  Hence, in virtue of
the Beltrami equation $(4.3)$  and the identity
$$ \frac{S_t}{2i \sqrt{S_z}} = -i \zeta_\ell,$$ we obtain
the following evolution law for the Schwarz function.

\begin{prop}\label{S_t-again}Under the above assumptions,
\begin{equation}\label{Schw-EG}
S_t (t,z) = 2(\zeta_z+\frac{1-\lambda}{1+\lambda}
\overline{\zeta_z}S_z) = 2\partial_z \zeta(z,S(t,z))
\end{equation}
for all $z$ in a neighborhood $U$ of $\Gamma(t)$.
\end{prop}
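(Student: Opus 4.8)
The plan is to combine the two independent expressions for the normal velocity $V$ of $\Gamma(t)$: the purely geometric one furnished by Proposition \ref{normalvelocity}, namely $V = S_t/(2i\sqrt{S_z})$, and the one dictated by the elliptic growth law, $V = \partial_\ell q$. Writing $q = (\zeta - p)/i$ and using $\partial_\ell p = 0$ on $\Gamma(t)$ (since $p$ vanishes on the boundary), the growth law becomes $V = \zeta_\ell/i$. Equating the two expressions yields the master identity
$$ \frac{S_t}{2i\sqrt{S_z}} = \frac{\zeta_\ell}{i}, $$
which, after multiplying through by $2i\sqrt{S_z}$, gives the clean relation $S_t = 2\sqrt{S_z}\,\zeta_\ell$ along the curve.

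Next I would compute $\zeta_\ell$ by the chain rule, treating $\zeta$ as a function of the two variables $z$ and $\bar z$: $\zeta_\ell = z_\ell\,\zeta_z + \bar z_\ell\,\zeta_{\bar z}$. Along $\Gamma(t)$ the unit tangent is $z_\ell = 1/\sqrt{S_z}$, with the branch fixed as in Proposition \ref{normalvelocity}, while differentiating the boundary relation $\bar z = S(z)$ gives $\bar z_\ell = S_z z_\ell = \sqrt{S_z}$. Substituting, the factors of $\sqrt{S_z}$ cancel and one obtains
$$ S_t = 2\sqrt{S_z}\cdot\frac{1}{\sqrt{S_z}}\bigl(\zeta_z + S_z\,\zeta_{\bar z}\bigr) = 2\bigl(\zeta_z + \zeta_{\bar z}\,S_z\bigr). $$
This is already the right-hand member of the proposition: by the chain rule $\zeta_z + \zeta_{\bar z}S_z$ is exactly the total $z$-derivative of $\zeta(z, S(t,z))$, so $S_t = 2\partial_z\zeta(z, S(t,z))$.

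To reach the middle expression I would invoke the Beltrami equation (\ref{Beltrami}). Since $\partial_{\bar z}\bar\zeta = \overline{\zeta_z}$, equation (\ref{Beltrami}) reads $\overline{\zeta_z} = \frac{1+\lambda}{1-\lambda}\zeta_{\bar z}$, i.e. $\zeta_{\bar z} = \frac{1-\lambda}{1+\lambda}\overline{\zeta_z}$. Feeding this into the displayed formula for $S_t$ replaces $\zeta_{\bar z}S_z$ by $\frac{1-\lambda}{1+\lambda}\overline{\zeta_z}S_z$, which is the first equality of the proposition.

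Finally, I would upgrade the identity from $\Gamma(t)$ to the neighborhood $U$. The quantities $S_t, S_z$ are holomorphic in $z$, and after the substitution $\bar z = S(t,z)$ the terms $\zeta_z(z, S)$ and $\zeta_{\bar z}(z, S)$ become holomorphic in $z$ as well, because the real analyticity of $\zeta$ lets one complexify $\bar z$ to an independent variable and then restrict to the graph of $S$. Both sides of each equality are therefore holomorphic in $U$ and coincide on $\Gamma(t)$, hence throughout $U$ by the identity theorem. I expect this last extension step — justifying the passage from the curve to the tubular neighborhood, while keeping careful track of the single-valued branch of $\sqrt{S_z}$ and of what $\overline{\zeta_z}$ means off $\Gamma(t)$ — to be the only genuinely delicate point; the rest is the chain rule and a one-line application of the Beltrami equation.
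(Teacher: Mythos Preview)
Your argument is correct and follows essentially the same route as the paper: equate the geometric expression $V=S_t/(2i\sqrt{S_z})$ with the growth law $V=\partial_\ell q=\zeta_\ell/i$, expand $\zeta_\ell$ via the chain rule using $z_\ell=1/\sqrt{S_z}$ and $\bar z_\ell=\sqrt{S_z}$, and then invoke the Beltrami equation (\ref{Beltrami}) to rewrite $\zeta_{\bar z}$ as $\frac{1-\lambda}{1+\lambda}\overline{\zeta_z}$. The only addition you make is the explicit justification of the passage from $\Gamma(t)$ to the tubular neighborhood by analytic continuation, which the paper leaves implicit.
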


Notice that the above formula is surprisingly similar to the
evolution law of $S(t,z)$ in the Laplacian growth process - cf.
Theorem 2.5.

\subsection{Dynamics of singularities of the Schwarz function}
The above proposition has interesting consequences.  We consider
first the dynamics of the singularities of the Schwarz function and
associated universal quadrature formulas that are preserved during
the elliptic growth.  We have seen in the section devoted to the
Laplacian growth that the poles, or more generally, the Cauchy
integral density of the Schwarz functions of the moving boundaries
are unchanged with the exception of the residue of the pole at $z=0$
which depends linearly on time $t$. Naturally, we expect that the
relations in the elliptic growth process are more complicated. The
present subsection collects some observations along these lines.

\begin{thm} Let $f(z)$ be a real analytic function defined in a neighborhood
of the closed domains $\overline{D(t)}$ which are moving according
to the elliptic growth law with an associated operator $L$. Then,
\begin{equation}\label{Lquadrature}
\frac{d}{dt} \int_{D(t)} f d Area = 2 \pi \tilde{f}(0),
\end{equation}
where $\tilde{f}$ solves the elliptic Dirichlet problem:
$$ Lu = 0 \ \ \ {\rm in} \ D(t),\ \ \ \ \ u|_{\Gamma(t)} = f.$$
\end{thm}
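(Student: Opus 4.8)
The plan is to reduce the left-hand side to a boundary integral and then replace $f$ by its $L$-harmonic extension $\tilde f$, after which the computation becomes a variant of the one proving Theorem~3.1. Since $f$ does not depend on $t$, the transport (Reynolds) theorem together with the growth law $V=\lambda\,\partial_n p$ gives
\[
\frac{d}{dt}\int_{D(t)} f\, dArea = \int_{\Gamma(t)} f\,V\, d\Gamma = \int_{\Gamma(t)} f\,\lambda\,\partial_n p\, d\Gamma .
\]
Because $\tilde f=f$ on $\Gamma(t)$, I may replace $f$ by $\tilde f$ in the last integral without changing its value; this is the single maneuver that distinguishes the present argument from the case $Lf=0$ already treated in Section~3.

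Next I would apply Green's second identity in divergence form for the operator $L=\nabla\cdot(\lambda\nabla)$, namely
\[
\int_{D(t)}(\tilde f\,Lp - p\,L\tilde f)\, d\mathrm{Vol} = \int_{\Gamma(t)}(\tilde f\,\lambda\,\partial_n p - p\,\lambda\,\partial_n \tilde f)\, d\Gamma ,
\]
exactly as in the proof of Theorem~3.1. The boundary condition $p|_{\Gamma(t)}=0$ annihilates the second boundary term, and the defining property $L\tilde f=0$ in $D(t)$ annihilates the second volume term, leaving
\[
\int_{\Gamma(t)} \tilde f\,\lambda\,\partial_n p\, d\Gamma = \int_{D(t)} \tilde f\, Lp\, d\mathrm{Vol} .
\]
Finally, inserting $Lp = 2\pi\,\delta(\cdot)$ (the single source of strength $s_1=2\pi$ at the origin) and evaluating the distributional pairing gives $\int_{D(t)} \tilde f\, Lp\, d\mathrm{Vol} = 2\pi\,\tilde f(0)$, which is the assertion.

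The steps above are formal where $p$ is singular, so the main technical point I expect to handle carefully is the pole of $p$ at the origin: Green's identity should be applied on $D(t)\setminus \overline{B(0,\varepsilon)}$ and the limit $\varepsilon\to 0$ taken, the small-circle contribution producing precisely the term $2\pi\,\tilde f(0)$ (this is where continuity of $\tilde f$ at $0$, guaranteed by its real analyticity, is used). I also need the solvability and regularity up to the boundary of the Dirichlet problem $L\tilde f=0$, $\tilde f|_{\Gamma(t)}=f$, which is standard for uniformly elliptic operators with real analytic coefficients and analytic boundary data, as recorded in the remark following Theorem~3.1. Note that $\tilde f$ depends on $t$ through the moving domain, but this causes no difficulty, since $\tilde f$ enters the argument only through its boundary trace, which is fixed to equal $f$; a last routine item is the justification of the transport theorem under the assumed real analytic dependence of $\Gamma(t)$ on $t$.
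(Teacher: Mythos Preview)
Your proof is correct and follows exactly the paper's approach: transport theorem, growth law $V=\lambda\,\partial_n p$, replacement of $f$ by $\tilde f$ on $\Gamma(t)$, and then the Green's-identity computation from Theorem~3.1. The paper's own proof is a single displayed line that invokes ``previous computations'' for the last equality, and your write-up simply fills in those details (including the $\varepsilon$-ball excision at the source), so there is nothing to add.
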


\begin{proof} Using the notations introduced in the earlier sections and
previous computations, we infer
$$ \frac{d}{dt} \int_{D(t)} f d Area = \int_{\Gamma(t)} f V  d\ell = \int_{\Gamma(t)} f \lambda \partial_n p d\ell =
2\pi \tilde{f}(0).$$
\end{proof}

If the boundaries $\Gamma(t)$ remain real analytic during the time
of growth as is tacitly assumed throughout this note we can, as
before, decompose the Schwarz function as follows:
$$ S(t,z) = S_+(t,z) - S_-(t,z), \ \ \ z \in \Gamma(t),$$
where
$$ S_-(t,z) = -\int_K \frac{d\mu(t,\sigma)}{\sigma-z},\ \ \ z \notin K.$$
Here, $K$ is a compact subset of $D(t)$ independent of $t$ and $\mu$
is a complex valued measure smoothly depending on $t$.  Then,
assuming in addition that $f$ is (complex) analytic in a
neighborhood of $\overline{D(t)}$, we find
$$ \frac{d}{dt} \int_{D(t)} f d Area = \frac{1}{2i} \frac{d}{dt} \int_{\Gamma(t)} f(z) S(z,t) dz =
$$ $$ \frac{1}{2 i} \frac{d}{dt} \int_{\Gamma(t)} f(z) S_-(z,t) dz  =
\pi \frac{d}{dt} \int_K [ \frac{1}{2\pi i} \int_{\Gamma(t)}
\frac{f(z) dz}{z-\sigma} ] d\mu(t,z) = \pi \int_K f(\sigma)
\frac{d}{dt}\{ d\mu (t,\sigma)\}.$$

This representation of the derivative of the average of an analytic
function becomes interesting in several particular cases. In
particular, as in the conservation law obtained by taking $f=1$, cf.
Corollary $2.7$ and Corollary $3.2$, it follows that
$$ \int_K  \frac{d}{dt}\{ d\mu (t,\sigma)\} = 2.$$
Another application is given by the following.

\begin{prop} Assume that for all times $t$ the singularities of the Schwarz function
contained in $D(t)$ are simple poles; i.e., we have in $D(t)$:
$$ S_-(t,z) = -\sum_{j=1}^N  \frac{a_j(t)}{b_j(t)-z} + \ {\rm (analytic \ remainder)},$$
with $b_j(t) \in D(t)$ for all $j$ and $t$. Then,
\begin{equation}\label{linearsystem}
\sum_{j=1}^N [a'_j(t) f(b_j(t)) + a_j(t)b'_j(t) f'(b_j(t))] = 2
\tilde{f}(0)
\end{equation}
for every analytic function $f$.
\end{prop}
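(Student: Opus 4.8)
The plan is to compute $\frac{d}{dt}\int_{D(t)} f\, d\mathrm{Area}$ in two ways and equate them, the whole content of the statement being the explicit form this takes once the representing measure of the Schwarz function is atomic. The first evaluation is supplied by the quadrature identity $(\ref{Lquadrature})$: for the analytic (in particular real analytic) function $f$,
$$ \frac{d}{dt}\int_{D(t)} f\, d\mathrm{Area} = 2\pi\,\tilde f(0), $$
with $\tilde f$ the $L$-harmonic extension of $f|_{\Gamma(t)}$. The second evaluation is exactly the chain of identities carried out in the paragraph preceding the statement, which rewrites $\int_{D(t)} f\, d\mathrm{Area} = \frac{1}{2i}\oint_{\Gamma(t)} f(z) S(t,z)\,dz$ (using $\bar z = S$ on $\Gamma(t)$), discards the interior-analytic part $S_+$, and produces $\frac{d}{dt}\int_{D(t)} f\, d\mathrm{Area} = \pi\,\frac{d}{dt}\int_K f(\sigma)\,d\mu(t,\sigma)$, where $\mu(t,\cdot)$ is the complex representing measure of $S_-(t,\cdot)$.

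The key step is to specialize this to the simple-pole hypothesis and make the residues explicit. Under the assumption, $S_-(t,\cdot)$ continued inward from $\Gamma(t)$ is meromorphic in $D(t)$ with principal part $a_j(t)/(z-b_j(t))$ at each $b_j(t)\in D(t)$; this is precisely the displayed form $S_-(t,z)=-\sum_j a_j(t)/(b_j(t)-z)$, so that the representing measure is the atomic measure $\mu(t,\cdot)=\sum_{j=1}^N a_j(t)\,\delta_{b_j(t)}$. Shrinking $\Gamma(t)$ onto small loops about the $b_j(t)$ and invoking the residue theorem gives $\oint_{\Gamma(t)} f(z)S_-(t,z)\,dz = 2\pi i\sum_j a_j(t)f(b_j(t))$, the analytic remainder producing no residues and hence dropping out; equivalently $\int_K f\,d\mu(t,\cdot)=\sum_j a_j(t)f(b_j(t))$.

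It then remains only to differentiate in $t$. Since $f$ is fixed while both the weights $a_j$ and the locations $b_j$ move, the chain rule gives
$$ \frac{d}{dt}\sum_{j=1}^N a_j(t)f(b_j(t)) = \sum_{j=1}^N\bigl[a'_j(t)f(b_j(t)) + a_j(t)b'_j(t)f'(b_j(t))\bigr], $$
and comparing $\pi$ times this with $2\pi\,\tilde f(0)$ from $(\ref{Lquadrature})$ yields $(\ref{linearsystem})$ after dividing by $\pi$. The one point demanding care, and the main obstacle, is the legitimacy of differentiating through the moving singularities: one needs the simple-pole structure to persist for all $t$, the functions $a_j(t),b_j(t)$ to be $C^1$ with each $b_j(t)$ staying strictly inside $D(t)$ so the contour can be contracted uniformly, and the (possibly $t$-dependent) analytic remainder to remain inert because its contour integral against $f$ vanishes identically in $t$; all of this is covered by the standing real-analyticity and $C^1$-in-$t$ assumptions of the section. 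It is worth emphasizing that the term $a_j b'_j f'(b_j)$, absent from the classical $L=\nabla^2$ conservation laws where the interior singularity sits fixed at the origin, is exactly the new feature contributed by the motion $b'_j(t)$ of the poles.
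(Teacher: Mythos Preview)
Your proof is correct and follows essentially the same route as the paper: identify the representing measure $\mu(t,\cdot)$ of $S_-$ as the atomic measure $\sum_j a_j(t)\delta_{b_j(t)}$, differentiate $\int_K f\,d\mu=\sum_j a_j(t)f(b_j(t))$ by the chain rule, and compare the resulting expression for $\frac{d}{dt}\int_{D(t)} f\,d\mathrm{Area}$ with the quadrature identity $(\ref{Lquadrature})$. Your added remarks on handling the analytic remainder via residues and on the regularity needed to differentiate through the moving poles are a welcome elaboration of what the paper leaves implicit.
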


\begin{proof} According to the computations in the proof of Theorem 7.2, we have
$$ \mu(t,\sigma) = \int \frac{ d [\sum_{j=1}^N a_j(t) \delta_{b_j(t)}](\sigma)}{z-\sigma},$$
hence
$$ \frac{d}{dt} \int_{D(t)} f d Area = \pi \sum_{j=1}^N [a'_j(t) f(b_j(t)) + a_j(t)b'_j(t) f'(b_j(t))]$$

and the statement follows from $(7.2)$.
\end{proof}

Now, by choosing $2N$ linearly independent analytic functions $f_k$
%of the form \textbf{\mathbf{I have NO IDEA WHAT THIS FORM HAS TO DO WITH ANYTHING!!!!}}
%$$ f_k = u_k - L(u_k), \ \ 1 \leq k \leq 2N$$
we can, based on (\ref{linearsystem}), form a linear system of
equations that at least in principle determines the dynamics of the
poles $b_j$ and residues $a_j$.

\section{Herglotz theorem and generating closed-form solutions}

The Herglotz theorem \cite{Herglotz} establishes a fascinating
one-to-one correspondence between the singularities of the Schwarz
function of the contour and the singularities of conformal maps from
a vicinity of the unit circle to a vicinity of the contour under
consideration.  Basically, it  states that if $a$ is a singular
point of a conformal map $f(w)$ from the unit disk $\mathbb D$ to
the domain $D$ such that $\partial D = \{ f(e^{iq}); q \in
[0,2\pi]\}$, then the Schwarz function of the curve $\partial D$ has
the singularity of the same kind at the point
\begin{equation}\label{Hergl1}
b = f(1/\bar a).
\end{equation}
Moreover, if an isolated singularity (i.e.,  a pole, an algebraic
singularity, or a logarithmic singularity) at $a$ appears in the
function $f(w)$ with a coefficient $A$, then the corresponding
singularity $b$ is present in the Schwarz function with the
coefficient$B$ determined from
\begin{equation}\label{Hergl2}
{\bar B} = A\,(-\bar a^2 f'(1/\bar a))^m,
\end{equation}
where $m$ is the multiplicity of a pole if $a$ is a pole; is a
rational number if $a$ is a an algebraic branch point; or,  if $a$
is a logarithmic singularity, $m$ is equal to zero.  Actually, the
last two equations, which we call Herglotz' theorem,  follow easily
from the representation of the Schwarz function $S(z)$ in terms of
the conformal map $f$ \cite{Davis},
$$S(z) = {\bar f}\circ(1/f^{-1})(z).$$
There is some evidence that  Herglotz' theorem should be  helpful in
solving the elliptic growth problem in terms of $z=f(t,w)$ and
generating exact solutions in the closed form.  Here, we present a
naive sketch of how we might expect  generating of exact solutions
for the elliptic growth should work.

First, find $\zeta = p+iq$ as a function of $z = x+iy$ and ${\bar z}
= x-iy$, either by solving the Beltrami equations (\ref{Beltrami})
or (\ref{Carleman}) for a given $\lambda$ and a given initial domain
$D(0)$, or by solving the Dirichlet problem in $D(0)$, thus finding
the Green function $p$ and consequently calculating the conjugate
function $q$ from the generalized Cauchy-Riemann equations (4.1),
(4.2). Then, the equation $p(z, \bar z) = 0$ implicitly defines the
Schwarz function of the moving boundary.

Second, either substitute $\zeta(z,S(z))$ into the equation
(\ref{Schw-EG}) for the dynamics of the Schwarz function $S(z)$
(scenario {\it A}), or, ``if one gets lucky'', try to invert the
solution $\zeta(z,\bar z)$, thus obtaining the function $z(\zeta,
\bar \zeta)$ (scenario {\it B}).\\
%, or substitute $\zeta(z,S(z))$ into the equation for a dynamics of the Schwarz function, $S(z)$ (scenario %{\it B}$.

{\it Scenario} {\it A}: As the third step, identify  singularities
of $S(z)$ (and their dynamics) that are already built in
(\ref{Schw-EG}) through the function $\zeta(z,S(z))$ found at the
previous step and, also, identify the singularities of $S(z)$
which are {\it not} the singularities of the RHS of
(\ref{Schw-EG}). The latter are time-independent as one can see
from (\ref{Schw-EG}) and represent constants of motion associated
with the dynamics of the growth.

Scenario {\it A}. Fourth step. Using the formulae
(\ref{Hergl1}),(\ref{Hergl2}) provided by the Herglotz theorem
recover an {\it explicit} form for the moving boundary, $z = f(t,
e^{iq})$, with the time dynamics of all parameters of $f$ given
implicitly by (\ref{Hergl1}), (\ref{Hergl2}).\\

{\it Scenario} {\it B}. Third step. Restrict $z(\zeta, \bar
\zeta)$ to the imaginary axis of $\zeta$, that is the axis $p=0$,
thus defining the function $f(e^{iq}) = \zeta(iq, -iq)$, which, as
was shown above, satisfies  the equation (\ref{areapres}) for area
preserving diffeomorphisms.

Scenario {\it B}. Fourth step.  Substitute $f(e^{iq})$ into
(\ref{areapres}) as an initial condition with time-dependent
parameters and solve it. (The technique of integration is the same
as that for the Laplacian growth \cite{M-LANL}, but the
singularities of $f$ are now lie both inside and outside the unit
circle.) Alternatively, find the singularities of $S(z)$ that
correspond to the singularities of $f(z)$ through
(\ref{Hergl1}),(\ref{Hergl2}) and calculate their time dynamics
using (\ref{Schw-EG}).  In either case the Herglotz theorem plays
the central role in finding exact solutions for elliptic growth.

%There is an evidence \cite{Mineev-Weinstein0000} that () has closed form solutions with singularities both %inside and outside the unit circle, and constants of integration are time-independent singularities of the %Schwarz function.

While a more complete theory for solving (\ref{areapres}) for the
elliptic growth exactly will be published elsewhere, we will present
below two examples of exact solutions in the case when the
singularities of $f(w)$ are poles, both inside and outside the unit
circle.

\subsection{Example: two simple moving poles}

Let us take
\begin{equation}\label{example1}
z(e^{iq}) = re^{iq} + \frac{A_1}{e^{iq} - a_1} + \frac{A_2}{e^{iq} -
a_2}
\end{equation}
as an initial condition for the equation (\ref{areapres}), assuming
$|a_1|<1$ and $|a_2|>1$.  One can verify by a direct substitution
that (\ref{example1}) is a solution of (\ref{areapres}) with time
dependent poles $a_{1}$, $a_{2}$, residues $A_{1}$, $A_{2}$ and the
conformal radius $r$.  The time dependence of these parameters is
given by the equations
\begin{eqnarray}\label{ex1}
b_{1,2} &=& z(1/{\bar a_{1,2}}),\\
B_{1,2} &=& - {\bar A_{1,2}}\,{\bar a_{1,2}^2}\,z'(1/{\bar a_{1,2}}),\\
2t + C  &=& r^2 - 2r\,\Re\frac{A_2}{a^2_2} -
\frac{|A_1|^2}{(1-|a_1|^2)^2} + \frac{|A_2|^2}{(1-|a_2|^2)^2},
\end{eqnarray}
where $b_{1,2},\,B_{1,2},$ and $C$ are constants of integration.
Actually, the RHS of the last equation is the area of the domain
$D(t)$ enclosed by the contour $z(t, e^{iq})$.  When $A_2=0$ this
is a standard Laplacian growth with a simple pole at $a_1(t)$.  In
this case, the contour is known  to develop a finite time
singularity by forming a cusp.  When $A_1=0$, the process becomes
a  so-called inverse Laplacian growth, when a more viscous fluid
displaces a less viscous one oppositely to a standard situation in
which  it is the other way around.  The inverse Laplacian growth
is stable and the shape rounds off during the evolution forming as
a rule a circle as a long time limit.  The case when both $A_1$
and $A_2$ are not equal to zero is a general case with a
nontrivial dynamics caused by an interlay between the
stabilization of a contour, due to the term in (\ref{example1})
with the pole $a_2$ that lies outside the unit circle, and
destabilization, due to the term in (\ref{example1}) with the pole
$a_1$ inside the unit circle.

It is interesting to note that the constants of integration in the
LHS of (\ref{ex1})-(8.6) describe singularities of the Schwarz
function of the moving contour, namely $b_{1,2}$ and $B_{1,2}$ are
the simple poles and residues of $S(z)$ respectively. Note that $b_1
\notin D(t)$ while $b_2 \in D(t)$.  Finally, $2t + C$ is the residue
of the Schwarz function at the simple pole at the origin, which
represents  the area of $D(t)$. It only remains now to find the
explicit expression of the function $\lambda({\bf x})$ for this
process described by (\ref{example1}).

\subsection{Example: two multiple stationary poles}

For the second example let us take, as the initial condition for the
(2.1), a function
\begin{equation}\label{example2}
z(e^{iq}) = re^{iq} + ae^{i(1-n)q} + be^{i(1+n)q},
\end{equation}
that describes a contour with $n$-fold symmetry which represents a
circle of radius $r$ modulated by a monochromatic wave in such a way
that exactly $n$ waves of an amplitude $\sqrt{2(|a|^2+|b|^2)}$ fit
the circumference (at least when $|a|$ and $|b|$ are both small with
respect to $r$).

One can substitute (\ref{example2}) into (2.1) and verify that
(\ref{example2}) is a solution of the area preserving diffeomorphism
(2.1) if the time dependent parameters $a$, $b$, and $r$  obey  the
following algebraic equations:
\begin{eqnarray}
A = a^{1/(1-n)}\,b^{1/(1+n)},\\
\frac{B}{r} = (1+n)a^{1/(n-1)} + (1-n)A a^{1/(1-n)},\\
2t + C = r^2 - (n-1)\,|a|^2 + (n+1)\,|b|^2,
\end{eqnarray}
where $A$, $B$, and $C$ are constants of integration.

The comments that can be made here about the dynamics of the
contour described by (\ref{example2}) are very similar to those
made above for the first example.  When $b=0$, the moving curve
%which is called `epitrachoid',
blows up in a finite time by
forming cusps and thus ceases to exist after that.  When $a=0$ the
process is the inverse (stable) Laplacian growth, it smoothes the
curve to a circle in a long term asymptotics.  When both $a \neq
0$ and $b \neq 0$, this is an intermediate case which can, for
instance, stabilize the dynamics and prevent the finite time blow
up with a proper choice of initial parameters.

%In order to see this we notice from (8.9) that in case $A=0$ (the
%Laplacian growth), $a$ grows proportionally to $r^{n-1}$ without
%bound and eventually reaches a critical value, $r/(n-1)$, where the
%curve given by (8.7) with $b=0$ forms cusps (blows up).  If however
%$A \neq 0$, then, as one can see from (8.9), $a$ does no longer
%grows without bound, but instead tends to a constant value in a
%long-time limit, which corresponds to the zero value of the RHS (so
%that $r \rightarrow \infty$) and may be so chosen as to prevent an
%onset of finite time singularities in the dynamics.

Just as in the previous example, the constants of integration
describe singularities of the Schwarz function, namely $A$ and $B$
are the $(1-n)^{th}$ and $(1+n)^{th}$ coefficients of the formal
Laurent expansion of S(z), while $2t + C$ is the area of $D(t)$
enclosed by $z(t,e^{iq})$.  Unfortunately, again, it is not clear at
the moment what elliptic parameter $\lambda$ is associated with this
process. This question requires further investigation.

\section{Conclusions}

In this work we have shown that the elliptic growth processes, which
present a natural generalization of the Laplacian growth, possess
some remarkable mathematical properties strongly resembling those
pertinent to the Laplacian growth.  Specifically, there is an
infinite set of conservation laws; these conservation laws can be
interpreted  in terms of potential theory; there exists a
parametrization of the moving interface by the stream function of an
associated fluid velocity vector field and, last but not least,
there are several interesting accompanying features related to the
singularities of the Schwarz functions of the moving boundaries. We
 expect the latter  to be especially helpful in generating and
illuminating particular closed form solutions of elliptic growth
problems. We think that the next step in this direction should be a
search for a class of multipliers $\lambda$, which will allow
explicit solutions of the Beltrami equations and, therefore, will
offer new closed form solutions of the elliptic growth phenomenon.

%***

%I hate to see twice within one (the last one) sentence the string
%'closed form solution', but cannot invent better phrase to replace.
%May be you can?

%***

\end{document}